\documentclass[preprint,12pt]{elsarticle}

\usepackage[T1]{fontenc}
\usepackage[utf8]{inputenc}
\usepackage{amssymb}
\usepackage{amsmath}
\usepackage{amsthm}
\usepackage{mathtools}
\usepackage{booktabs}
\usepackage{graphicx}
\usepackage{url}
\usepackage{lineno}
\usepackage{lmodern}   
\usepackage{microtype} 

\numberwithin{equation}{section}

\theoremstyle{plain}
\newtheorem{theorem}{Theorem}[section]
\newtheorem{lemma}{Lemma}[section]

\theoremstyle{definition}

\newtheorem{assumption}{Assumption}[section]
\theoremstyle{remark}
\newtheorem{remark}{Remark}[section]

\newcommand{\vect}[1]{\mathbf{#1}}
\newcommand{\mat}[1]{\mathbf{#1}}
\newcommand{\R}{\mathbb{R}}
\newcommand{\Expect}[1]{\langle #1 \rangle}  
\newcommand{\Ham}{\mathcal{H}}
\newcommand{\obs}{\mathrm{obs}}
\newcommand{\hid}{\mathrm{hid}}
\newcommand{\iu}{\mathrm{i}}         

\begin{document}

\begin{frontmatter}

\title{PRISM: Positive-Real Identification of Sparse Mori--Hamiltonians
       from Partial Observations\tnoteref{t1}}
\tnotetext[t1]{Extended and corrected version of arXiv:2606.15505v1; submitted to Automatica.}

\author{Mohammad A. Ayoubi}
\ead{maayoubi@scu.edu}
\address{Department of Mechanical Engineering, Santa Clara University,
         Santa Clara, CA 95053, USA}

\begin{abstract}
Discovering the governing equations of a physical system from data is a central goal across the sciences, yet in most experiments only a few
states are accessible while the rest stay hidden.  Existing approaches treat this partial observability as an obstacle to be removed
by first reconstructing the hidden state---a step that is ill-posed under noise and that discards the physical constraints, such as energy
conservation, that the true dynamics obey.  We show that for conservative (Hamiltonian) systems no reconstruction is needed: projecting the dynamics 
onto the measured coordinates yields a memory kernel that we prove to be a lossless positive-real rational matrix, whose poles are the hidden natural
frequencies and whose positive-semidefinite residues encode the couplings. From this kernel we recover a closed, interpretable governing equation for the
observed dynamics---identified from output data alone, passive by construction, and validated by out-of-sample forecasting.  Under stronger
conditions---an equipartitioned measure with position coupling, or a forced input--output experiment---the bare hidden frequencies and couplings of the
underlying Hamiltonian are additionally recoverable.  We test the method on linear, nonlinear, and chaotic systems under realistic noise.  Because it
returns energy-conserving equations of motion from partial measurements, it offers a common tool for problems spanning mechanics, fluid and plasma physics,
and beyond.
\end{abstract}

\begin{keyword}
Mori--Zwanzig projection \sep memory kernel \sep positive-real systems \sep
Hamiltonian systems \sep system identification \sep sparse identification \sep
data-driven equation discovery \sep partial observations
\MSC[2020] 93B30 \sep 37M10 \sep 70H05 \sep 93B15 \sep 82C05
\end{keyword}

\end{frontmatter}


\section{Introduction}
\label{sec:intro}

Inferring the equations that govern a physical system from measured data
is a problem common to mechanics, fluid and plasma physics, chemistry, and
beyond.  A recurring difficulty unites these fields: the measured
variables are almost never the complete state.  A gyroscope reports a few
body rates, not the sloshing fuel that perturbs them; a wall probe records
a field at one location, not the hidden modes driving it. The measurement
is thus a low-dimensional projection of a larger, partly hidden system,
and the central question is whether the full governing equation can be
recovered from that projection alone.

Data-driven equation discovery has advanced rapidly, but largely for the
fully observed case.  The Sparse Identification of Nonlinear Dynamics
(SINDy)~\cite{brunton2016discovering} extracts parsimonious models from
full-state trajectories; Koopman operator
methods~\cite{koopman1931hamiltonian,williams2015data} linearize the
dynamics in a lifted space; and delay embeddings~\cite{takens1981detecting},
combined with neural-network coordinate learning~\cite{bakarji2023discovering},
extend these ideas to partial observations. What these approaches share is a strategy:
treat partial observability as a nuisance to be removed by first
reconstructing the hidden state, then identifying the dynamics. This
two-stage route compounds difficulties. Selecting the embedding dimension
and time delay from noisy data is itself an unsolved problem~\cite{casdagli1991state};
the reconstructed coordinates carry no guarantee of energy conservation or
passivity~\cite{gao2024sindy,brunton2017chaos,arbabi2017ergodic}; and the
auxiliary embedding adds dimensions without adding physics.

Here we show that for conservative systems the reconstruction step can be
bypassed entirely.  The Mori--Zwanzig (MZ)
projection~\cite{mori1965transport,zwanzig1973nonlinear,chorin2000optimal}
maps the exact Hamiltonian flow onto the measured coordinates, producing a
closed second-order equation whose \emph{memory kernel} carries all
feedback from the hidden degrees of freedom. Our central result is a structural
theorem: when the hidden modes couple through position, this kernel is a
lossless positive-real rational matrix whose poles are the hidden frequencies
and whose positive-semidefinite residues encode the couplings.  From it we
recover a closed, interpretable governing equation for the observed
dynamics---directly from the output autocorrelation, without state
reconstruction and without a trained network, and passive by construction.
The bare hidden Hamiltonian follows under stronger conditions, which we make
precise---and only for hidden modes that couple to the observable strongly
enough to leave a fingerprint on the measured signal, the observability
condition that no reconstruction from partial data can escape.  We call the
resulting method PRISM (Positive-Real Identification of Sparse
Mori-Hamiltonians).

\subsection{Related work}
\label{sec:related}

A closely related precedent is the data-driven rational parameterization
of memory kernels in molecular dynamics by Lei, Baker, and
Li~\cite{lei2016data}.  PRISM differs in three respects.  It requires only
the measured time series, not the coarse-grained force that
Ref.~\cite{lei2016data} also needs and that is inaccessible in most
physical settings (gyroscopes do not measure force; strain gauges do not
measure pressure).  It operates on deterministic Hamiltonian trajectories,
recovering the hidden frequencies from a single excited trajectory and the
couplings from an equipartitioned ensemble or a sufficiently rich
multi-output measurement---a strictly weaker requirement than an explicit
hidden force.  And it derives the positive-real kernel structure as a
theorem from Hamiltonian physics, rather than positing it as an ansatz,
yielding formal identifiability and uniqueness guarantees.
Data-driven Mori--Zwanzig methods that learn the memory kernel from a
reduced set of observables via Koopman
regression~\cite{lin2021data,lin2022data} share PRISM's goal of recovering
reduced dynamics from partial measurements, but they \emph{fit} the kernel
as a black-box reduced-order model for general (typically dissipative)
systems; PRISM instead \emph{proves} the kernel form from conservative
structure and returns interpretable physical parameters---the hidden
frequencies and couplings---rather than a predictive surrogate.

PRISM also belongs to a controls tradition we have so far left implicit:
identifying passive, positive-real, and port-Hamiltonian systems from data.
Benner, Goyal, and Van Dooren fit a port-Hamiltonian model to
frequency-response data~\cite{benner2020identification}, Schwerdtner and Voigt
carry this to noisy measurements~\cite{schwerdtner2021port}, and
passivity-preserving reduction by interpolation at spectral
zeros~\cite{sorensen2005passivity,antoulas2005new} reaches the same guarantees
from a different direction.  PRISM wants what they want---a physically
consistent, passive model---but differs in three ways.  It treats the
\emph{lossless} conservative case, where the kernel is a reactance rather than
a general positive-real function and its poles are the hidden natural
frequencies, not merely interpolation data.  It works from the \emph{output}
autocorrelation alone, with no measured input.  And it returns an interpretable
Hamiltonian, with explicit observed--hidden couplings, in place of a black-box
realization.  The overlap is not only formal: when the hidden modes couple
reactively---as fuel slosh and gyroscopes do---the bare frequencies are exactly
the transmission (spectral) zeros of the observable receptance, and recovering
them calls for the very frequency-response route these works take.  That is
where PRISM meets the port-Hamiltonian identification literature.  Two
concurrent preprints are close in scope: one treats partial observation of
linear systems via the Mori--Zwanzig formalism~\cite{wang2026partial}, the
other learns latent port-Hamiltonian dynamics from partial
observations~\cite{li2026identify}.  PRISM differs in its closed-form reactance
kernel and its explicit account of what is recoverable for conservative
systems.

In the more challenging quantum many-body setting---operator
reconstruction on an exponentially large Hilbert space---related work
learns Hamiltonians by augmenting the dynamics with a neural network and
distilling an interpretable operator via curriculum
learning~\cite{heightman2025solving}.  PRISM targets conservative classical
systems, where the positive-real kernel yields the Hamiltonian in closed
form, with uniqueness and passivity guarantees and without a trained
network.


\section{Framework}
\label{sec:framework}

We consider a system
\begin{equation}
  \dot{\vect{z}}(t) = \vect{f}(\vect{z}(t)),
  \qquad
  \vect{y}(t) = \mat{\Gamma}\,\vect{z}(t) + \vect{n}(t),
  \label{eq:general_system}
\end{equation}
with full state $\vect{z} \!\in\! \R^{2n_\obs + 2n_\hid}$, measured output
$\vect{y}(t) \!\in\! \R^{n_\obs}$ ($n_\obs \!\ll\! n_\obs\!+\!n_\hid$),
and noise $\vect{n}(t)$, with $\mat{\Gamma}$ the output matrix that selects the
measured coordinates; only $\vect{y}$ is accessible.  We assume
$\vect{f} = \mathbf{J}\,\nabla_{\vect{z}}\Ham$ (Hamilton's equations,
symplectic matrix $\mathbf{J}$) and partition
$\vect{z} = (\vect{q}_\obs, \vect{p}_\obs, \{q_i\}, \{p_i\})$ into
observable and hidden coordinates.  The observable position is
$\vect{x}(t)\triangleq\vect{q}_\obs(t)$, with
$\dot{\vect{x}}=\mat{M}^{-1}\vect{p}_\obs$ ($\mat{M}=\mat{I}$ without loss
of generality) and $\vect{y}=\vect{x}+\vect{n}$; the $q_i,p_i$ are hidden.
The full Hamiltonian is
\begin{equation}
  \Ham = \Ham_\obs(\vect{x},\dot{\vect{x}})
    + \sum_{i=1}^{n_\hid}\!\Big(\tfrac{1}{2}p_i^2
      + \tfrac{1}{2}\omega_i^2 q_i^2\Big)
    + \Ham_c(\vect{x},\vect{q}_\hid),
  \label{eq:full_ham}
\end{equation}
where $\Ham_\obs\!=\!\tfrac{1}{2}\dot{\vect{x}}^\top\dot{\vect{x}}
+ V(\vect{x})$ is the observable self-dynamics---$V(\vect{x})$ denoting the
potential energy of the observable subsystem---and the $i$-th hidden mode
is a harmonic oscillator at frequency $\omega_i>0$.

\begin{assumption}[Linear coupling]\label{ass:A}
The coupling is linear in the hidden coordinates,
$\Ham_c = \sum_{i}\vect{x}^\top \vect{c}_i\,q_i$ with constant
vectors $\vect{c}_i \in \R^{n_\obs}$.
\end{assumption}

Projecting onto the observable subspace ($\mathcal{P}$, with
$\mathcal{Q}=I-\mathcal{P}$) yields the \emph{exact} second-order
Mori--Zwanzig equation~\cite{mori1965transport,chorin2000optimal}
\begin{equation}
  \ddot{\vect{x}}(t) = \vect{g}(\vect{x},\dot{\vect{x}})
  + \int_0^t \mat{K}(t\!-\!s)\,\vect{x}(s)\,ds
  + \vect{F}(t),
\label{eq:MZ}
\end{equation}
with self-force $\vect{g}=-\nabla_{\vect{x}}V$, memory kernel $\mat{K}(t)$
encoding the hidden-mode feedback, and fluctuation force $\vect{F}(t)$
satisfying $\mathcal{P}\vect{F}(t)=0$.  No closure approximation is made.
Throughout, a hat denotes the Laplace transform of the corresponding
time-domain quantity, with $s\in\mathbb{C}$ the complex Laplace variable,
e.g.\ $\hat{\mat{K}}(s)=\mathcal{L}\{\mat{K}(t)\}$ and
$\hat{\mat{C}}(s)=\mathcal{L}\{\mat{C}(\tau)\}$.

\section{Kernel structure}
\label{sec:kernel}

Our central result characterizes the memory kernel exactly.

\begin{theorem}[Kernel structure]\label{thm:kernel_structure}
Under Assumption~\ref{ass:A} the kernel transform $\hat{\mat{K}}(s)$ has the
partial-fraction form
\begin{equation}
  \hat{\mat{K}}(s) = \sum_{i=1}^{n_\hid}
    \frac{\mat{R}_i}{s^2 + \omega_i^2},
  \qquad \mat{R}_i = \vect{c}_i\vect{c}_i^\top \succeq 0,
  \label{eq:kernel_pf}
\end{equation}
with imaginary-axis poles and residues
$\mathrm{Res}_{s=\iu\omega_i}\hat{\mat{K}}(s) = \mat{R}_i/(2\iu\omega_i)$.
Equivalently $\hat{\mat{K}}(s)$ is a \emph{lossless positive-real}
(reactance) matrix, and the spectral density
\begin{equation}
  \mat{S}(\omega)=\sum_i\mat{R}_i\,\delta(\omega-\omega_i)
  \label{eq:spectral_density}
\end{equation}
is a positive matrix measure on $(0,\infty)$.
\end{theorem}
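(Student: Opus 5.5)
The approach is to eliminate the hidden oscillators exactly, using the linearity of Assumption~\ref{ass:A}, and then read off the kernel. Under Assumption~\ref{ass:A}, Hamilton's equations for the $i$-th hidden mode are
\[
\dot q_i=p_i,\qquad \dot p_i=-\omega_i^2 q_i-\vect{c}_i^\top\vect{x}(t).
\]
This is a linear forced oscillator driven by the observable trajectory. By variation of constants,
\[
q_i(t)=q_i(0)\cos\omega_i t+\frac{p_i(0)}{\omega_i}\sin\omega_i t-\int_0^t\frac{\sin\omega_i(t-s)}{\omega_i}\,\vect{c}_i^\top\vect{x}(s)\,ds .
\]
The observable equation is $\ddot{\vect{x}}=-\nabla V(\vect{x})-\sum_i\vect{c}_i q_i$. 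Substituting the expression for $q_i$ splits the right-hand side into three parts:
\begin{itemize}
\item the self-force $\vect{g}=-\nabla V$;
\item a convolution term $\int_0^t\mat{K}(t-s)\vect{x}(s)\,ds$ with kernel $\mat{K}(t)=\sum_i \vect{c}_i\vect{c}_i^\top\,\sin(\omega_i t)/\omega_i$;
\item a remainder depending only on the hidden initial data $(q_i(0),p_i(0))$.
\end{itemize}
We identify the remainder with $\vect{F}(t)$. Taking $\mathcal{P}$ as conditional expectation given the observables, under a product measure in which the hidden initial conditions have zero mean, we get $\mathcal{P}\vect{F}=0$, matching \eqref{eq:MZ}.

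A sign convention needs care. The derivation produces $-\sum_i\vect{c}_i\vect{c}_i^\top\sin(\omega_i t)/\omega_i$, whereas \eqref{eq:MZ} carries a $+$ on the integral. I would reconcile the two either by absorbing the sign into the stated convention for $\mat{K}$ (restoring force) or by noting that integration by parts moves the instantaneous $\sum_i\vect{c}_i\vect{c}_i^\top/\omega_i^2$ shift into $\vect{g}$. I expect this bookkeeping, together with justifying that the elimination coincides with the Mori--Zwanzig projection, to be the main obstacle.

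The main step is the transform. Using $\mathcal{L}\{\sin(\omega t)/\omega\}=1/(s^2+\omega^2)$ termwise gives \eqref{eq:kernel_pf} with $\mat{R}_i=\vect{c}_i\vect{c}_i^\top$. This matrix is rank one and positive semidefinite, since $\vect{v}^\top\mat{R}_i\vect{v}=(\vect{c}_i^\top\vect{v})^2\ge0$. The poles are at $s=\pm\iu\omega_i$, and they are simple when the $\omega_i$ are distinct. The residue at $\iu\omega_i$ is $\lim_{s\to\iu\omega_i}(s-\iu\omega_i)\mat{R}_i/((s-\iu\omega_i)(s+\iu\omega_i))=\mat{R}_i/(2\iu\omega_i)$. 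If frequencies repeat, we merge their residues, which keeps them positive semidefinite.

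For the lossless positive-real claim, I would pass to the immittance $\mat{Z}(s)=s\hat{\mat{K}}(s)=\sum_i \mat{R}_i\, s/(s^2+\omega_i^2)$. This is the Foster reactance form, so I would verify the defining properties directly:
\begin{itemize}
\item It is real for real $s$ and symmetric.
\item It is analytic in $\mathrm{Re}\,s>0$.
\item Its Hermitian part satisfies $\mathrm{Re}\,\mat{Z}(s)\succeq0$ there, because for $s=\sigma+\iu\tau$ one has $\mathrm{Re}\,\frac{s}{s^2+\omega^2}=\frac{\sigma(|s|^2+\omega^2)}{|s^2+\omega^2|^2}\ge0$, multiplying a positive semidefinite $\mat{R}_i$.
\item Its real part vanishes on the imaginary axis away from the poles.
\item Its poles are simple and imaginary-axis, with residues $\mat{R}_i/2\succeq0$.
\end{itemize}
I would state explicitly that this is the sense in which "$\hat{\mat{K}}$ is lossless positive-real" is meant.

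Finally, the spectral measure follows by writing $\mat{K}(t)=\int_0^\infty \frac{\sin\omega t}{\omega}\,\mat{S}(d\omega)$ with $\mat{S}=\sum_i\mat{R}_i\delta_{\omega_i}$. This is a finite sum of point masses with positive semidefinite weights supported at the $\omega_i>0$, hence a positive matrix measure on $(0,\infty)$.
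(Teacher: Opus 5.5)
Your core argument is exactly the paper's proof in Appendix~A: variation of parameters on each forced hidden oscillator, substitution into $\ddot{\vect{x}}=-\nabla V-\sum_i\vect{c}_i q_i$, a termwise Laplace transform, and the residue at $s=\iu\omega_i$.

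\textbf{The sign obstacle does not exist.} It is a miscalculation, not a convention issue. With your (correct) formula for $q_i(t)$, the term $-\vect{c}_i q_i$ contains $-\vect{c}_i\bigl(-\int_0^t\omega_i^{-1}\sin\omega_i(t-s)\,\vect{c}_i^\top\vect{x}(s)\,ds\bigr)$. The two minus signs cancel, so the convolution enters with $+\mat{K}$, where $\mat{K}(t)=\sum_i\vect{c}_i\vect{c}_i^\top\sin(\omega_i t)/\omega_i$, exactly as in \eqref{eq:MZ}. The paper obtains this directly. Do not apply either repair you propose:
\begin{itemize}
\item Absorbing a sign would give $\hat{\mat{K}}=-\sum_i\mat{R}_i/(s^2+\omega_i^2)$, which contradicts the statement.
\item Integrating by parts gives $\int_0^t\omega^{-1}\sin\omega(t-s)\,\vect{x}(s)\,ds=\omega^{-2}\bigl[\vect{x}(t)-\cos(\omega t)\,\vect{x}(0)\bigr]-\omega^{-2}\int_0^t\cos\omega(t-s)\,\dot{\vect{x}}(s)\,ds$. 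This trades the position kernel for a static stiffness shift, an initial-value term and a friction kernel on $\dot{\vect{x}}$, which is not the representation the theorem asserts.
\end{itemize}

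\textbf{Your concern about the noise term is well placed.} The paper simply names $-\sum_i\vect{c}_i r_i(t)$ as $\vect{F}(t)$ and never checks $\mathcal{P}\vect{F}=0$. That identity needs the hidden initial data to be conditionally mean-zero given the observables, as in your product measure. It fails under the Gibbs measure of the coupled $\Ham$, where $\Expect{q_i(0)\mid\vect{x}(0)}=-\vect{c}_i^\top\vect{x}(0)/\omega_i^2$.

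\textbf{Your positive-real argument is a genuinely different and sharper route.} The paper infers losslessness from the imaginary-axis poles and passivity from $\mat{R}_i\succeq0$ via the matrix Bochner theorem. It never checks a positive-real inequality. In fact $\hat{\mat{K}}$ itself cannot be positive-real, for two reasons:
\begin{itemize}
\item its residue $\mat{R}_i/(2\iu\omega_i)$ is skew-Hermitian rather than Hermitian positive semidefinite;
\item it decays like $s^{-2}$ at infinity.
\end{itemize}
Your passage to $s\hat{\mat{K}}(s)=\sum_i\mat{R}_i\,s/(s^2+\omega_i^2)$ fixes this. That function is a Foster reactance: its Hermitian part is $\succeq0$ for $\mathrm{Re}\,s>0$ and vanishes on the imaginary axis, and its residues are $\mat{R}_i/2\succeq0$. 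This is what makes ``lossless positive-real'' literally true. The only price is that you must state explicitly that this immittance, the force-to-velocity mobility of the hidden oscillators seen through the $\vect{c}_i$, is the object meant. Your spectral-measure step coincides with the paper's.
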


\noindent The proof---variation of parameters on the hidden oscillators,
then Laplace transform---is given in \ref{app:proof1}.  Because the hidden
modes are undamped, the poles lie exactly on the imaginary axis rather than
in the open left half-plane---the lossless (reactance) case, as opposed to
the dissipative kernels of the general
GLE~\cite{anderson1967network}.  The poles of $\hat{\mat{K}}$ are
thus the hidden natural frequencies: \emph{the spectrum of the hidden
Hamiltonian is read directly from the memory kernel, with no state
reconstruction.}

\section{Identifiability and reconstruction}
\label{sec:identifiability}

The kernel, and hence the Hamiltonian, can be recovered from the measured
autocorrelation.

\begin{theorem}[Identifiability]\label{thm:identifiability}
Let $\mat{C}(\tau)=\Expect{\vect{x}(t)\vect{x}(t+\tau)^\top}$ be the
observable autocorrelation, where $\Expect{\cdot}$ denotes the ensemble
(equivalently, stationary time) average.  The \emph{closed-loop spectrum}
$\{\Omega_j\}$---the roots of
$\det(s^2\mat{I}-\mat{G}-\hat{\mat{K}}(s))=0$---and the associated
mode-shape directions are identifiable, directly from the cosine content of
$\mat{C}(\tau)$ (\ref{app:proof2}), from any persistently exciting
measurement---one that excites all hidden modes---provided the observability
Gramian
\begin{equation}
  \mat{W}_o = \int_0^\infty e^{\mat{A}^\top t}\mat{\Gamma}^\top\mat{\Gamma}\,
             e^{\mat{A}t}\,dt \succ 0
  \label{eq:gramian}
\end{equation}
is positive definite, where $\mat{A}$ is the state matrix of the observable
subsystem.  The \emph{bare hidden frequencies} $\{\omega_i\}$ and coupling
vectors of Theorem~\ref{thm:kernel_structure} are a distinct, generally
non-coincident set---since the poles of $\hat{\mat{K}}(s)$ are generically
\emph{zeros} of $\hat{\mat{C}}(s)$ rather than poles of it---and are
recoverable only via the explicit deconvolution of
Eq.~\eqref{eq:kernel_from_corr}, not by direct spectral fitting of
$\mat{C}(\tau)$.  The
residue \emph{magnitudes} are additionally identifiable when the measure is
\emph{equipartitioned} (equal energy per hidden mode; canonical or
microcanonical equilibrium, or the empirical average over an ensemble of
excited trajectories), which removes the per-mode energy weighting
(see \ref{app:proof2}).
\end{theorem}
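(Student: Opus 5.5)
Under Assumption~\ref{ass:A}, and with the observable self-dynamics linearized about equilibrium so that $-\nabla_{\vect{x}}V(\vect{x}) \approx \mat{G}\vect{x}$, the full system is a linear Hamiltonian system $\dot{\vect{z}}=\mat{A}_{\mathrm{full}}\vect{z}$. Its Hamiltonian is quadratic, with a stiffness matrix that is symmetric and, by assumption, positive definite. The plan is to work with this linear Hamiltonian system in three steps: first identify the closed-loop spectrum and mode shapes from $\mat{C}(\tau)$, then show why the bare frequencies are zeros rather than poles of $\hat{\mat{C}}$, and finally recover the residue magnitudes under equipartition.

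\textbf{Step 1: closed-loop spectrum.} The stiffness matrix has an orthonormal eigenbasis $\{\vect{v}_j\}$ with eigenvalues $\Omega_j^2>0$. Every trajectory is therefore $\vect{z}(t)=\sum_j \big(a_j\cos\Omega_j t + b_j\sin\Omega_j t\big)\vect{v}_j$. Projecting with $\mat{\Gamma}$ and averaging over the ensemble (or over time for a stationary quasi-periodic trajectory with incommensurate $\Omega_j$, in which case the cross terms average out) gives
\[
\mat{C}(\tau)=\sum_j \tfrac12 E_j\,(\mat{\Gamma}\vect{v}_j)(\mat{\Gamma}\vect{v}_j)^\top\cos\Omega_j\tau ,
\]
where $E_j=\Expect{a_j^2+b_j^2}$.

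Persistent excitation means $E_j>0$ for every $j$. Positivity of the Gramian $\mat{W}_o$ means, by the Hautus/PBH test, that no eigenvector lies in $\ker\mat{\Gamma}$, so $\mat{\Gamma}\vect{v}_j\neq 0$. Hence every $\Omega_j$ appears as a genuine cosine line in $\mat{C}(\tau)$, and the direction of $\mat{\Gamma}\vect{v}_j$ (up to scale) is the range of the corresponding rank-one coefficient. Linear independence of distinct cosines makes the decomposition unique.

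To see that these lines are exactly the roots of $\det(s^2\mat{I}-\mat{G}-\hat{\mat{K}}(s))$, eliminate the hidden coordinates in the Laplace domain using Theorem~\ref{thm:kernel_structure}. This gives the Schur complement of the pencil $s^2\mat{I}+\text{stiffness}$. The determinant formula
\[
\det(s^2\mat{I}+\mat{K}_{\mathrm{full}})=\det(s^2\mat{I}-\mat{G}-\hat{\mat{K}}(s))\prod_i(s^2+\omega_i^2)
\]
then identifies the closed-loop spectrum with the full eigenvalues. Repeated $\Omega_j$ will need separate care: the coefficient is then a sum of rank-one terms, and only the subspace is identified.

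\textbf{Step 2: bare frequencies are zeros of $\hat{\mat{C}}$.} Take the Laplace transform of the MZ equation~\eqref{eq:MZ}, multiply by $\vect{x}(0)^\top$, and average. Because $\mathcal{P}\vect{F}=0$ and $\vect{F}$ is orthogonal to $\vect{x}(0)$, and using $\dot{\mat{C}}(0)=0$, we get
\[
\big(s^2\mat{I}-\mat{G}-\hat{\mat{K}}(s)\big)\hat{\mat{C}}(s)=s\,\mat{C}(0).
\]
This is the deconvolution \eqref{eq:kernel_from_corr}:
\[
\hat{\mat{K}}(s)=s^2\mat{I}-\mat{G}-s\,\mat{C}(0)\hat{\mat{C}}(s)^{-1}.
\]
At $s=\iu\omega_i$, $\hat{\mat{K}}$ has a pole, so $\hat{\mat{C}}$ must lose rank along $\mat{R}_i$; the bare frequencies are therefore transmission zeros of $\hat{\mat{C}}$. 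Generically the $\omega_i$ do not coincide with any $\Omega_j$, because by interlacing (Cauchy, for a rank-restricted Schur complement) they lie strictly between the closed-loop frequencies unless $\vect{c}_i=0$. So they cannot be read off as cosine lines and must come from the inverse above.

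\textbf{Step 3: residue magnitudes.} The cosine amplitudes $E_j\,\mat{\Gamma}\vect{v}_j\vect{v}_j^\top\mat{\Gamma}^\top$ mix the unknown energy $E_j$ with the geometry, so the scale of $\mat{R}_i$ obtained through the deconvolution depends on $\mat{C}(0)$. Under equipartition, $E_j=2k_BT$ for all $j$. Then $\mat{C}(0)=k_BT\,(-\mat{G}_{\mathrm{eff}})^{-1}$, the static Schur complement, and the normalization becomes known up to the single scalar $T$. This determines $\mat{G}$ and $\mat{R}_i$, with $T$ fixed by $\Expect{\dot{\vect{x}}\dot{\vect{x}}^\top}=k_BT\,\mat{I}$.

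\textbf{Main obstacle.} I expect the hardest part to be making the cancellation $\Expect{\hat{\vect{F}}(s)\vect{x}(0)^\top}=0$ rigorous for non-equipartitioned single-trajectory averages. With a general measure, the projection $\mathcal{P}$ is not the Mori projection with respect to that measure, and extra energy-weighting terms appear. This is precisely why the statement limits the magnitude claim to the equipartitioned case. I would handle it by defining $\mathcal{P}$ relative to the actual empirical measure and checking that the kernel produced by that projection coincides with~\eqref{eq:kernel_pf} only under equal modal energies.
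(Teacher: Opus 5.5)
Your Step~1 is sound and follows the paper's own route (the modal expansion of $\mat{C}(\tau)$ in \ref{app:proof2}, items (i)--(ii)). Your Schur-complement determinant identity and your PBH reading of $\mat{W}_o$ supply steps the paper only asserts. The rank condition is the right reading, because the infinite-horizon integral defining $\mat{W}_o$ diverges for a lossless $\mat{A}$.

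The genuine gap is in Step~2. The $\vect{F}$ in \eqref{eq:MZ} is the variation-of-parameters force $-\sum_i\vect{c}_i r_i(t)$, built from the raw $q_i(0),p_i(0)$. The identity $\Expect{\vect{F}(\tau)\vect{x}(0)^\top}=\mat{0}$ does not follow from $\mathcal{P}\vect{F}=0$. For a generic stationary ensemble, hidden and observed coordinates are correlated at equal times. Under the equipartitioned (Gibbs) ensemble you invoke in Step~3, $\Expect{q_i\mid\vect{x}}=-\vect{c}_i^\top\vect{x}/\omega_i^2$, whence $\Expect{\vect{F}(\tau)\vect{x}(0)^\top}=\sum_i\omega_i^{-2}\mat{R}_i\cos(\omega_i\tau)\,\mat{C}(0)\neq\mat{0}$. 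Keeping this term replaces your identity by
\[
  \bigl(s^2\mat{I}-\mat{G}-\hat{\mat{K}}(s)\bigr)\bigl(\mat{C}(0)-s\hat{\mat{C}}(s)\bigr)=k_BT\,\mat{I}.
\]
So the bare $\omega_i$ are rank drops of $\mat{C}(0)-s\hat{\mat{C}}(s)$, not of $\hat{\mat{C}}(s)$. With one observed and one hidden coordinate, the equilibrium $\hat{C}(s)$ has its anti-resonance at $s^2=-(\omega^2+c^2/\omega^2)$, not at $s^2=-\omega^2$.

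Your identity $\bigl(s^2\mat{I}-\mat{G}-\hat{\mat{K}}(s)\bigr)\hat{\mat{C}}(s)=s\,\mat{C}(0)$ requires $\Expect{q_i(0)\vect{x}(0)^\top}=\Expect{p_i(0)\vect{x}(0)^\top}=0$. That holds for non-stationary product initial data. Among stationary ensembles it holds, generically, only for equal $E_j$ in your notation, i.e.\ modal energies $\propto\Omega_j^2$, which is not equipartition. So Steps~2 and~3 run on incompatible ensembles, and your ``main obstacle'' diagnosis is inverted.

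Redefining $\mathcal{P}$ with respect to the Gibbs measure does not rescue it. That only moves the correlation into an extra term $\gamma(t)\vect{x}(0)$, with $\gamma(t)=\sum_i\omega_i^{-2}\mat{R}_i\cos\omega_i t$, in the projected equation written with the same kernel $\mat{K}$. The paper's appendix does not supply the missing argument either: it posits the uncorrelatedness (hypothesis (ii) of the lemma in \ref{app:fdt}) while averaging over the equipartitioned measure. You must carry it as an explicit hypothesis and restrict the ``zeros of $\hat{\mat{C}}$'' claim to ensembles that satisfy it.

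Step~3 has further errors.
\begin{itemize}
\item With $E_j=\Expect{a_j^2+b_j^2}$, equipartition gives $E_j=2k_BT/\Omega_j^2$, not $2k_BT$. Only the former yields your $\mat{C}(0)=k_BT(-\mat{G}_{\mathrm{eff}})^{-1}$. The latter gives $\mat{C}(0)=k_BT\,\mat{I}$, which is precisely the ensemble on which Step~2 is valid.
\item Even then, $\mat{C}(0)$ fixes only the static combination $\mat{G}+\sum_i\mat{R}_i/\omega_i^2$, not $\mat{G}$ and the $\mat{R}_i$ separately.
\item Your Step-2 formula $\hat{\mat{K}}=s^2\mat{I}-\mat{G}-s\,\mat{C}(0)\hat{\mat{C}}^{-1}$ is invariant under $\mat{C}\to\lambda\mat{C}$. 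If it held, there would be no scale left for equipartition to fix.
\end{itemize}
The paper obtains magnitudes without any deconvolution. In its modal expansion the cosine amplitudes are $\vect{v}_j\vect{v}_j^\top\Expect{E_j}/\Omega_j^2$, with $\Expect{E_j}$ the modal \emph{energy}. De-tilting by $\Omega_j^2$ and imposing $\Expect{E_j}\equiv E$ leaves a single global constant. Note that these are the closed-loop modal residues, not the bare $\vect{c}_i\vect{c}_i^\top$. If you want the bare residues under equipartition, your normalization $k_BT\,\mat{I}=\Expect{\dot{\vect{x}}\dot{\vect{x}}^\top}=-\ddot{\mat{C}}(0)$ is exactly what the corrected identity needs: $\hat{\mat{K}}(s)=s^2\mat{I}-\mat{G}-k_BT\bigl(\mat{C}(0)-s\hat{\mat{C}}(s)\bigr)^{-1}$.
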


\begin{remark}[Anti-resonance mechanism]
Near a pole $s=\iu\omega_k$ of $\hat{\mat{K}}$, the
inversion~\eqref{eq:kernel_from_corr} forces $\hat{\mat{C}}(\iu\omega_k)\to\mat{0}$:
a bare hidden frequency appears as a \emph{zero} (anti-resonance), not a peak,
of the observable spectrum---the dynamic-vibration-absorber effect.  The peaks
of $\mat{C}(\tau)$ are therefore the closed-loop normal-mode frequencies
$\{\Omega_j\}$, generically distinct from the bare $\{\omega_i\}$.
\end{remark}

\noindent The generalized fluctuation--dissipation
theorem~\cite{kubo1966fluctuation} applied to the deterministic system
gives the exact inversion (derivation in \ref{app:proof2})
\begin{equation}
  \hat{\mat{K}}(s) = \bigl[s^2\hat{\mat{C}}(s)
    - s\mat{C}(0) - \mat{\Lambda}\mat{C}(0)
    - \mat{G}\hat{\mat{C}}(s)\bigr]\hat{\mat{C}}^{-1}(s),
  \label{eq:kernel_from_corr}
\end{equation}
with streaming matrix $\mat{\Lambda}=\dot{\mat{C}}(0)\mat{C}^{-1}(0)$ and
observable stiffness $\mat{G} = -\nabla^2_{\vect{x}}V\big|_{\vect{x}_0}$
($\mat{G}=\mat{0}$ for free or gyroscopic observables); $\hat{\mat{C}}(s)$
is invertible for $\mathrm{Re}(s)>0$ iff $\mat{W}_o\succ 0$.  Equipartition
is essential for the residues: a single non-equilibrium trajectory weights
each $\mat{R}_i$ by an arbitrary modal energy $E_i$, recovering the poles
but not the residue scale, whereas equipartition ($E_i\equiv E$) restores
$\mat{R}_i$ up to one global constant fixed by $\mat{C}(0)$.  Given
$\{(\omega_i,\mat{R}_i)\}$, the couplings follow by Cholesky factorization
$\vect{c}_i=\mathrm{chol}(\mat{R}_i)$ (unique up to sign), and the recovered
Hamiltonian
\begin{equation}
  \Ham = \Ham_\obs
  + \sum_{i=1}^{n_\hid}\!\Bigl(\tfrac{1}{2}p_i^2
    + \tfrac{1}{2}\omega_i^2 q_i^2\Bigr)
  + \sum_{i=1}^{n_\hid}\vect{q}_\obs^\top\vect{c}_i\,q_i
\label{eq:reconstructed_ham}
\end{equation}
is unique in minimal order by the matrix-valued Bochner
theorem~\cite{bochner1932vorlesungen,gantmacher1959}.
\emph{Assembly via Phases~1--2 (direct cosine fit of $\mat{C}(\tau)$)
recovers the closed-loop equation of motion~\eqref{eq:MZ} exactly;
recovering the bare hidden-mode Hamiltonian~\eqref{eq:full_ham} additionally
requires the deconvolution route of Eq.~\eqref{eq:kernel_from_corr}, which
Phases~1--2 do not perform.}

\section{Algorithm}
\label{sec:algorithm}

The procedure converts an observable time series
$\{\vect{x}(t_k)\}_{k=1}^M$ into the full Hamiltonian~\eqref{eq:reconstructed_ham}.

\paragraph{Phase~1 --- Kernel extraction}
Compute the empirical autocorrelation:
\begin{equation}
  \mat{C}(\tau)
  = \frac{1}{T-\tau}\!\int_0^{T-\tau}\!\vect{x}(t\!+\!\tau)\vect{x}(t)^\top dt.
  \label{eq:autocorr}
\end{equation}
Estimate the streaming matrix $\mat{\Lambda}=
\dot{\mat{C}}(0)\mat{C}^{-1}(0)$.  Two routes then recover
the hidden-mode parameters.  The kernel is related to the data by the
exact frequency-domain inversion~\eqref{eq:kernel_from_corr}, a
\emph{deconvolution}: $\hat{\mat{K}}(s)$ follows from dividing by
$\hat{\mat{C}}(s)$ in the transform domain, with
$\mat{K}(\tau)=\mathcal{L}^{-1}\{\hat{\mat{K}}(s)\}$ (\ref{app:fdt}).  This
recovers the kernel itself, but requires forming $\hat{\mat{C}}(s)$ and its
inverse from noisy data.  Because that inversion amplifies measurement
noise, we do \emph{not} compute the kernel directly to locate modes in
practice.  Instead we exploit the fact that, by
Theorem~\ref{thm:identifiability}, the support of the spectral density
of the autocorrelation is exactly the set of recoverable \emph{closed-loop}
frequencies $\{\Omega_j\}$, so that under equipartition $\mat{C}(\tau)$ is a
sum of undamped cosines at $\{\Omega_j\}$,
$\mat{C}(\tau)=\sum_j \mat{B}_j\cos(\Omega_j\tau)$.  We therefore
fit $\mat{C}(\tau)$ directly on a frequency grid by non-negative
least squares (NNLS), reading the active frequencies $\{\omega_i\}$ from
the support of the recovered weights.  This needs no
numerical differentiation of the data and is the route used for all
benchmarks below.  The explicit kernel
deconvolution~\eqref{eq:kernel_from_corr} is invoked only if the kernel
itself (rather than the pole-residue parameters) is explicitly required.

\paragraph{Phase~2 --- Sparse rational identification}
On the frequency grid $\{\nu_j\}_{j=1}^{n_g}\subset
[\nu_\mathrm{min},\nu_\mathrm{max}]$, the practical realization
used for all benchmarks fits the autocorrelation directly by
non-negative least squares,
\begin{equation}
  \min_{\{w_j\ge 0\}}
  \;\Big\|\mat{C}(\tau)
  - \sum_j w_j\cos(\nu_j\tau)\Big\|_2^2,
  \label{eq:nnls}
\end{equation}
solved channel-wise (and on the matrix entries $\hat C_{ab}$ for
$n_\obs>1$); the active set $\mathcal{I}=\{j:w_j>\epsilon_\mathrm{tol}\}$
gives the frequencies, and the per-mode residue matrices
$\mat{R}_i$ follow from the (de-tilted) cosine amplitudes of the matrix
autocorrelation.  Non-negativity of the weights is the scalar shadow of
the $\mat{R}_i\succeq0$ passivity constraint and suffices when the
residues are rank one (single hidden mode per frequency).  When
higher-rank or strongly overlapping residues must be resolved while
enforcing positive semidefiniteness exactly, Eq.~\eqref{eq:nnls}
generalizes to the semidefinite program
\begin{equation}
  \min_{\{\mat{R}_j\!\succeq\,0\}}
  \;\Big\|\hat{\mat{K}}(s)
  - \sum_j \frac{\mat{R}_j}{s^2+\nu_j^2}\Big\|_F^2
  + \lambda\sum_j\|\mat{R}_j\|_*,
  \label{eq:sdp}
\end{equation}
with $\|\cdot\|_*$ the nuclear norm (convex relaxation of the
$\ell_0$ sparsity penalty~\cite{fazel2002matrix}) and the
$\mat{R}_j\succeq0$ constraint enforced by semidefinite
programming (CVXPY\slash SCS~\cite{diamond2016cvxpy}).  The benchmarks here are all in the
rank-one regime, so the NNLS form~\eqref{eq:nnls} is used throughout.

\paragraph{Phase~3 --- Hamiltonian assembly}
For each $i\in\mathcal{I}$: compute $\vect{c}_i=\mathrm{chol}(\mat{R}_i)$,
set $\omega_i=\Omega_i$, and assemble
Hamiltonian~\eqref{eq:reconstructed_ham}.


\section{Validation}
\label{sec:validation}

We validate PRISM on three canonical Hamiltonian benchmarks, chosen so
that each isolates a distinct property of the method rather than merely
repeating a success. \emph{Benchmark~1} (linear spring chain) tests the
regime where the theory is \emph{exact at any amplitude}: a linear
conservative system with harmonic hidden modes, the setting of
Theorem~\ref{thm:identifiability} (closed-loop spectrum and mode-shape
recovery); Theorem~\ref{thm:kernel_structure}'s bare hidden-Hamiltonian claim
is not separately exercised in this benchmark, since the deconvolution step
(Eq.~\eqref{eq:kernel_from_corr}) is not applied here.
\emph{Benchmarks~2 and 3} (FPUT and H\'enon-Heiles) are
\emph{nonlinear}: the linear-kernel structure of
Assumption~\ref{ass:A} holds only in the small-amplitude limit, and
these cases test what happens as that limit is left. For them the relevant
claim is not exact recovery but \emph{graceful degradation}: PRISM
recovers the small-amplitude (effective) frequencies, and the breakdown of
that description---energy spreading in FPUT, forecast divergence at the
onset of chaos in H\'enon-Heiles---is itself a usable diagnostic. We
therefore report frequencies and a diagnostic for B2/B3, and full
frequency-plus-residue recovery only for the exactly linear B1. For all
tests, only the \emph{observable} coordinate time
series is provided to the algorithm; hidden states are withheld entirely.

\subsection{Benchmark 1: Linear spring-mass chain}
\label{sec:b1}

\paragraph{System}
$n_\mathrm{tot}=4$ equal masses connected by identical springs
($k=1$, $m=1$) with \emph{fixed--free} boundary conditions (left end
anchored to a wall, right end free), and the two end masses observed
($n_\obs=2$, $\vect{x}=[q_1,q_4]^\top$; $n_\hid=2$).
The full Hamiltonian is:
\begin{equation}
  \Ham = \sum_{i=1}^4\frac{p_i^2}{2m}
    + \frac{k}{2}q_1^2 + \sum_{i=1}^3\frac{k}{2}(q_{i+1}-q_i)^2.
  \label{eq:chain}
\end{equation}
The fixed--free spectrum has no rigid-body zero mode, so all four
normal-mode frequencies $\omega_j = 2\sin\!\big((2j-1)\pi/18\big)$,
$j=1,\ldots,4$, are oscillatory targets:
$\{0.347,1.000,1.532,1.879\}$.

\begin{figure}[t]
\centering
\includegraphics[width=\linewidth]{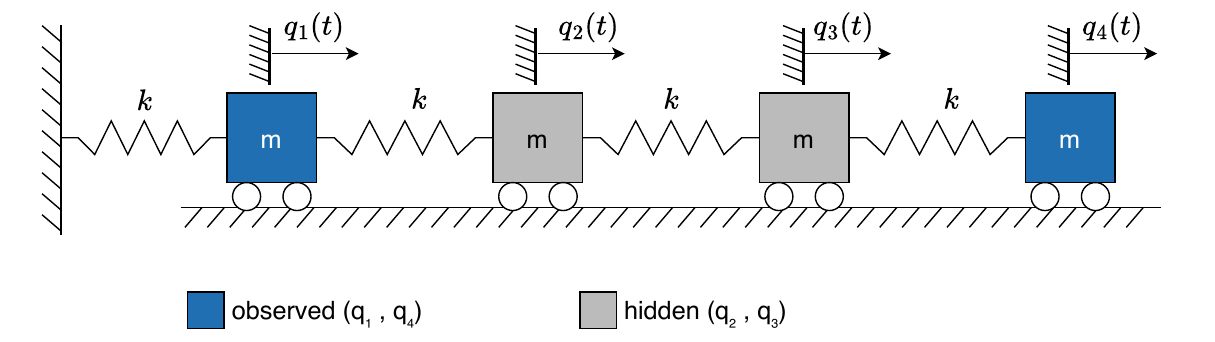}
\caption{\textbf{Benchmark~1 setup: fixed--free spring-mass chain.} Four equal
masses ($m$) joined by identical springs ($k$); the left end is anchored to a
wall, the right end is free.  The two end masses $q_1,q_4$ are observed; the
interior masses $q_2,q_3$ are hidden.  Each $q_i$ is the horizontal
displacement of mass $i$ from its equilibrium (dashed), positive to the
right.\label{fig:chain_setup}}
\end{figure}

\paragraph{Result}
From the two observed channels under 10\% additive Gaussian noise,
PRISM recovers all four \emph{closed-loop normal-mode frequencies of the
full chain} with relative error at or below $0.22\%$
(three of four at or below $0.05\%$); frequency recovery is insensitive to the
modal energy distribution, consistent with
Theorem~\ref{thm:identifiability}.  A model fit on the first $40\%$ of the
record forecasts the held-out remainder of both observed channels
(Fig.~\ref{fig:chain}a,b)---an equation of motion integrated forward, not a
spectral fit.
With the autocorrelation averaged over an equipartitioned ensemble of
$n_\mathrm{ens}=300$ initial conditions (so the residue hypothesis of
Theorem~\ref{thm:identifiability} holds), the $2\times2$ residue matrices
are recovered---including the off-diagonal coupling structure and its
sign---with a mean Frobenius error of $3.3\%$ (maximum $5.4\%$) up to the
single global scale that is intrinsically unidentifiable
(Fig.~\ref{fig:chain}c,d, comparing the recovered diagonal $R_{11}$,
off-diagonal $R_{12}$, and the full $2\times2$ residues per mode against the
analytical $\vect{v}_i\vect{v}_i^\top$). The recovered residues remain positive
semidefinite by construction, so the Cholesky factorization yields valid
coupling vectors. The
larger residue error relative to the frequency error is expected: the
off-diagonal cross-correlation $C_{12}$ has lower signal-to-noise than the
diagonal terms, and the residue scale is recovered only in the ensemble
mean, whereas the pole locations are fixed by phase and are therefore far
more robust.

\subsection{Benchmark 2: Fermi-Pasta-Ulam-Tsingou (FPUT) lattice}
\label{sec:b2}

\paragraph{System}
The $\alpha$-FPUT chain~\cite{fermi1955studies} with $n_\mathrm{tot}=8$
particles and nonlinear coupling:
\begin{equation}
  \Ham = \sum_{i=1}^8\frac{p_i^2}{2}
    + \sum_{i=0}^8\Bigl[\frac{(q_{i+1}-q_i)^2}{2}
      + \frac{\alpha(q_{i+1}-q_i)^3}{3}\Bigr],
  \label{eq:fput}
\end{equation}
with $\alpha=0.25$, fixed boundary conditions $q_0=q_9=0$.
Only the \emph{first particle} $q_1(t)$ is observed ($n_\obs=1$,
$n_\hid=7$).  Assumption~\ref{ass:A} is \emph{intentionally
violated} here, testing PRISM robustness to nonlinear hidden coupling.

\paragraph{Result}
The initial condition excites a single normal mode (mode~1), so in the
small-amplitude limit ($A=0.1$, where Assumption~\ref{ass:A}
approximately holds) the observable $q_1(t)$ shows essentially one spectral
peak and PRISM recovers that effective frequency to $0.5\%$
($\hat\omega_1=0.346$ vs.\ $0.347$). As the amplitude grows, the cubic
coupling transfers energy to other modes---the historical FPUT
energy-sharing phenomenon---and PRISM detects a growing number of active
spectral peaks: one for $A\le0.2$, two by $A=0.3$, and three by $A\ge0.7$
(Fig.~\ref{fig:fput}). We report this active-peak count versus
amplitude as a direct, data-driven diagnostic of nonlinear energy
spreading; we do not report residues, since under nonlinear coupling the
modes are not exact and a linear-residue claim would be ill-defined.

\subsection{Benchmark 3: H\'{e}non-Heiles system}
\label{sec:b3}

\paragraph{System}
The 2D H\'{e}non-Heiles Hamiltonian~\cite{henon1964applicability}:
\begin{equation}
  \Ham = \frac{1}{2}(p_x^2+p_y^2)
    + \frac{1}{2}(x^2+y^2)
    + x^2 y - \frac{y^3}{3},
  \label{eq:henon}
\end{equation}
with only $x(t)$ observed ($n_\obs=1$; hidden: $y(t)$).
The system is quasi-periodic for $E<1/6$ and chaotic above the
escape energy $E=1/6$~\cite{henon1964applicability}.

\paragraph{Result}
In the quasi-periodic regime ($E=0.05$, well below the escape energy
$E=1/6$), PRISM recovers a single dominant frequency
$\hat\omega=0.996$ from the $x(t)$ observable---the softened linearized
frequency of the well (linear value $\omega=1$). The decisive test is
forecasting: a model built from the recovered frequency, fit on the first
40\% of the record, accurately predicts the held-out remainder
(Fig.~\ref{fig:henon}c), confirming that PRISM captures the regular
dynamics and not merely its power spectrum. In the chaotic regime
($E=0.15$, $E/E_\mathrm{esc}=0.9$) the same procedure tracks for a short
horizon and then diverges from the truth (Fig.~\ref{fig:henon}d), as
required by sensitive dependence on initial conditions: no finite-mode model
can forecast a chaotic trajectory, so the forecast-breakdown horizon is
itself a data-driven diagnostic of chaos rather than a deficiency of the
method. We report frequencies and this forecast diagnostic only, not
residues, since well-defined effective modes exist only in the
small-amplitude (regular) regime.

\begin{figure}[t]
\centering
\includegraphics[width=\linewidth]{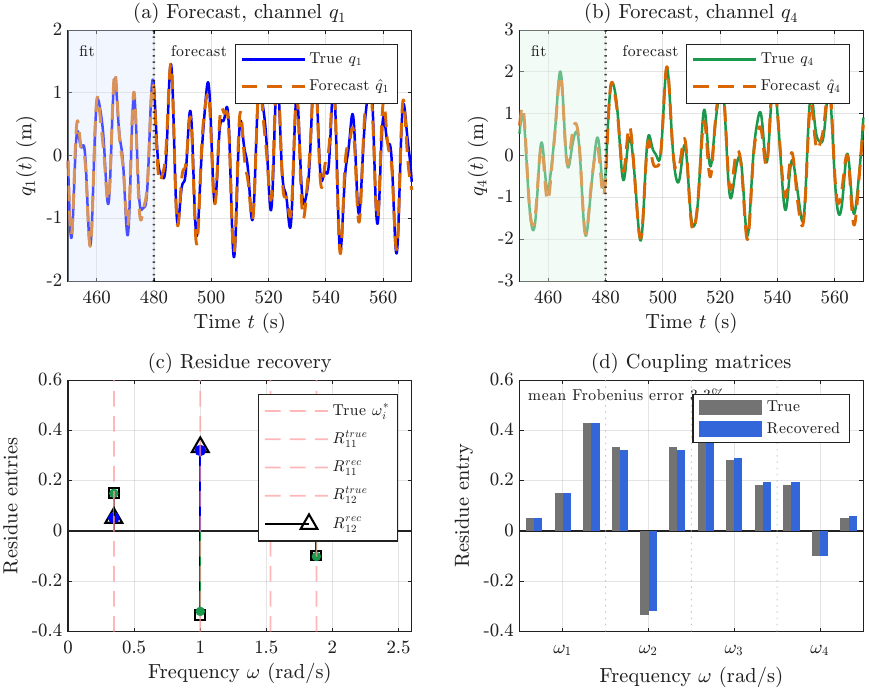}
\caption{PRISM on the fixed--free spring-mass chain ($n_\mathrm{tot}=4$,
  $\vect{x}=[q_1,q_4]^\top$; $10\%$ additive Gaussian noise).
  \textbf{(a,b)} out-of-sample forecasts of the observed channels $q_1,q_4$;
  \textbf{(c,d)} recovered vs.\ true residue and coupling matrices at the four
  identified frequencies.  See text for details.\label{fig:chain}}
\end{figure}

\begin{figure}[t]
\centering
\includegraphics[width=\linewidth]{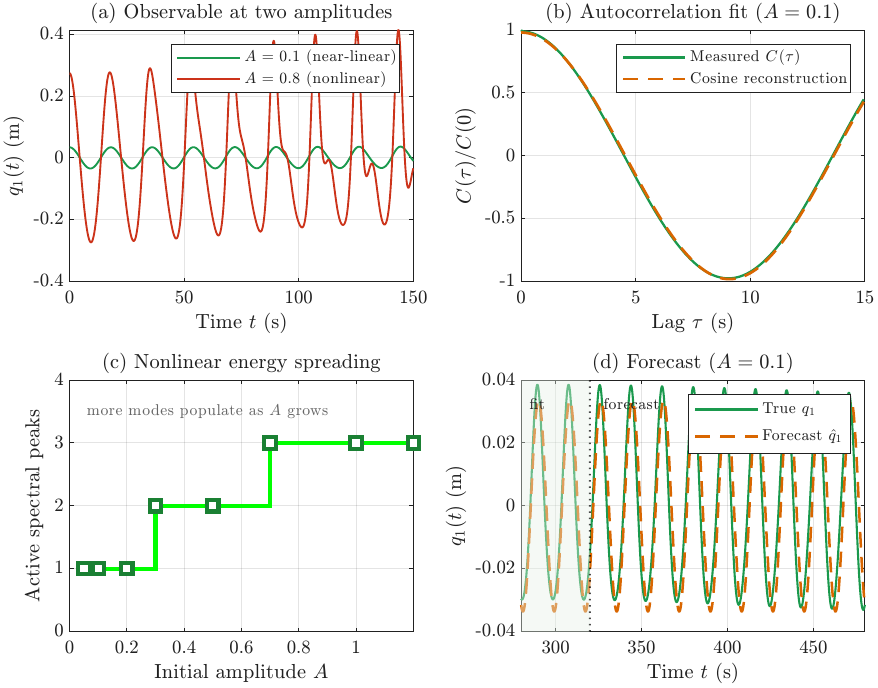}
\caption{\textbf{FPUT $\alpha$-chain.} The single-mode initial condition
spreads energy to other modes as amplitude grows: active spectral peaks
recovered from $q_1(t)$ rise from one ($A\le0.2$) to three ($A\ge0.7$), a
data-driven signature of nonlinear energy sharing; the small-amplitude
forecast (panel d) tracks the observable. 10\% additive Gaussian noise.
\label{fig:fput}}
\end{figure}

\begin{figure}[t]
\centering
\includegraphics[width=\linewidth]{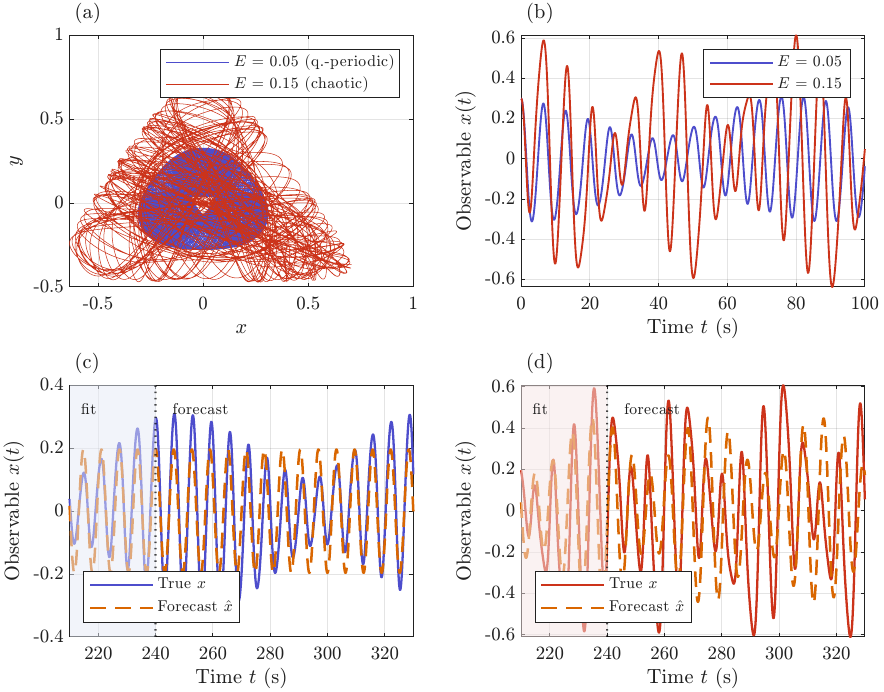}
\caption{\textbf{H\'enon-Heiles.} Out-of-sample forecast of $x(t)$: the
regular regime ($E=0.05$, panel c) is tracked by the recovered
single-frequency model, whereas the chaotic regime ($E=0.15$, panel d)
diverges after a short horizon---the forecast-breakdown horizon is itself a
diagnostic of chaos. 10\% additive Gaussian noise.
\label{fig:henon}}
\end{figure}

\section{Reactive coupling}
\label{sec:reactive}

Assumption~\ref{ass:A} lets the hidden modes push on the observable through
position.  That is the right picture for a hidden mass tied to the observable
by a spring, and it makes the residues positive semidefinite.  But some of the
systems we most want to reach do not work this way.  A sloshing tank, a
gyroscope, a tuned absorber---each reacts on the observable through its
\emph{acceleration}, not its position.  We call this reactive coupling, and it
changes the story in one important place: how the bare hidden frequencies are
recovered.

Eliminate the hidden coordinate as before, and the memory kernel is no longer a
sum of simple sections.  It becomes a general matrix \emph{reactance} function,
\begin{equation}
  \hat{\mat{K}}(s)=\mat{K}_\infty
    +\sum_{i=1}^{n_\hid}\frac{\mat{R}_i^{(0)}+\mat{R}_i^{(2)}s^2}{s^2+\omega_i^2}.
  \label{eq:reactive_kernel}
\end{equation}
Its poles still sit on the imaginary axis at the bare frequencies
$\{\omega_i\}$.  Two things are new.  The kernel no longer decays at high
frequency---the constant $\mat{K}_\infty$ is a static stiffness shift---and its
residues, read in the plain $\mat{R}_i/(s^2+\omega_i^2)$ basis, need not be
positive.  Passivity still holds, because the system is still conservative; but
the certificate is now Foster's reactance condition on $\hat{\mat{K}}$, not
$\mat{R}_i\succeq0$.  The non-negativity constraint in the Phase~2
program~\eqref{eq:nnls} must be relaxed accordingly.

The recovery question then splits cleanly in two.  The peaks of the observable
autocorrelation are the closed-loop frequencies $\{\Omega_j\}$, the roots of
$\det\!\big(s^2\mat{I}-\mat{G}-\hat{\mat{K}}(s)\big)=0$.  The bare frequencies
$\{\omega_i\}$ are something else: they are the poles of $\hat{\mat{K}}$, which
are the \emph{transmission zeros} of the observable receptance
\begin{equation}
  \mat{H}(s)=\big(s^2\mat{I}-\mat{G}-\hat{\mat{K}}(s)\big)^{-1}.
  \label{eq:receptance}
\end{equation}
In other words, a bare hidden frequency shows up as an \emph{anti-resonance} of
the observable, not a peak.  For position coupling the two coincide, and the
output-only method of Section~\ref{sec:algorithm} finds them.  For reactive
coupling they do not: the anti-resonance of the free-response autocorrelation is
displaced, and the deconvolution~\eqref{eq:kernel_from_corr} run on output-only
data returns a biased frequency.  Recovering $\{\omega_i\}$ without bias needs a
\emph{forced} experiment---an input--output measurement whose transmission zeros
are read off by rational FRF fitting.  This is a stronger data requirement than
the conservative output-only case, and it is the one that slosh- and gyro-type
problems actually demand.

\subsection{A worked example: cart and pendulum}
\label{sec:cartpend}

The smallest system that shows all of this is a spring-mounted cart carrying a
pendulum---the standard cart--pendulum of control texts~\cite[Ch.~3]{ogata2010modern},
and a minimal mechanical model of a single sloshing mode.  The cart
position $x$ is observed; the pendulum angle $\theta$ is hidden.  Linearizing
about $\theta=0$,
\begin{equation}
  \begin{bmatrix} M+m & ml\\ ml & ml^2\end{bmatrix}
  \begin{bmatrix}\ddot{x}\\ \ddot{\theta}\end{bmatrix}
  +\begin{bmatrix} k & 0\\ 0 & mgl\end{bmatrix}
  \begin{bmatrix}x\\ \theta\end{bmatrix}=\begin{bmatrix}F(t)\\0\end{bmatrix},
  \label{eq:cartpend}
\end{equation}
with cart mass $M$, bob mass $m$, length $l$, cart spring $k$, and forcing
$F(t)$.

\begin{figure}[t]
\centering
\includegraphics[width=0.62\linewidth]{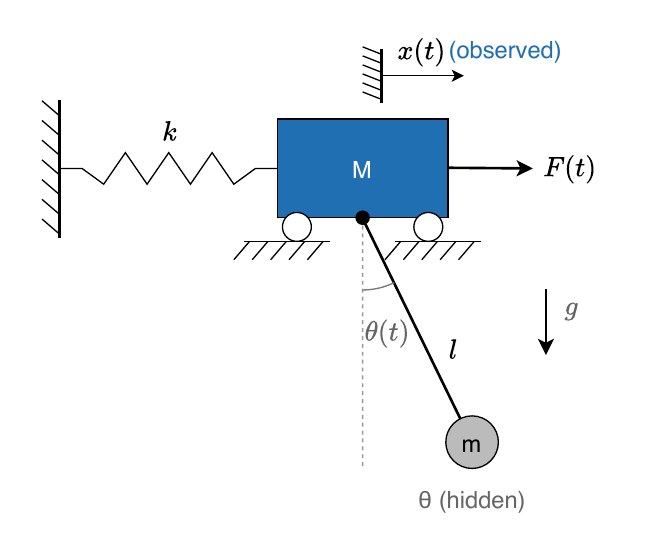}
\caption{\textbf{Cart--pendulum setup (slosh analog).} A spring-mounted cart
(mass $M$, spring $k$, position $x$) carries a pendulum (bob mass $m$, length
$l$, angle $\theta$).  The cart position $x$ is observed; the pendulum angle
$\theta$ is hidden.  The pendulum couples to the cart through its
acceleration---reactive coupling.\label{fig:cartpend_setup}}
\end{figure}  The pendulum enters through $\ddot{x}$, so the coupling is reactive.
Eliminating $\theta$ gives a single reactance section,
\begin{equation}
  \hat{K}(s)=\frac{mg}{l}\,\frac{s^2}{s^2+g/l}
    =\frac{mg}{l}-\frac{(mg/l)\,\omega_p^2}{s^2+\omega_p^2},
  \qquad \omega_p^2=\frac{g}{l}.
  \label{eq:cartpend_kernel}
\end{equation}
Its residue in the $R/(s^2+\omega^2)$ basis is $-mg^2/l^2$, which is negative---
outside Assumption~\ref{ass:A}---yet its pole still sits exactly at the bare
pendulum frequency $\omega_p=\sqrt{g/l}$.

Take $M=1$, $m=0.5$, $l=1$, $k=15$, and $g=9.81\,\mathrm{m/s^2}$, so
$\omega_p=3.13\,\mathrm{rad/s}$.  The cart's autocorrelation peaks at the
closed-loop frequencies $2.51$ and $4.84\,\mathrm{rad/s}$.  The pendulum
frequency is not among them; it falls in the gap between them
(Fig.~\ref{fig:cartpend}b).  Run the output-only
deconvolution~\eqref{eq:kernel_from_corr} on the free response and it returns a
biased value---and, being excitation-dependent, an unstable one.  The forced
receptance tells a cleaner story: it has a transmission zero exactly at
$\omega_p$, and a sine-dwell sweep of the cart recovers that anti-resonance
under $10\%$ measurement noise (Fig.~\ref{fig:cartpend}a).  Because the cart
barely moves near the anti-resonance, the frequency is best read from a rational
fit of the whole FRF, not from the raw notch.

We state the boundary plainly.  For reactively coupled hidden modes PRISM still
recovers the bare frequency, but only from a forced input--output experiment and
with a Foster kernel model in place of the positive-residue one.  The
conservative output-only route of Benchmarks~1--3 is not enough.  This is the
honest scope of the method---and the case that makes the slosh and gyroscope
applications rigorous rather than aspirational.

\begin{figure}[t]
\centering
\includegraphics[width=\linewidth]{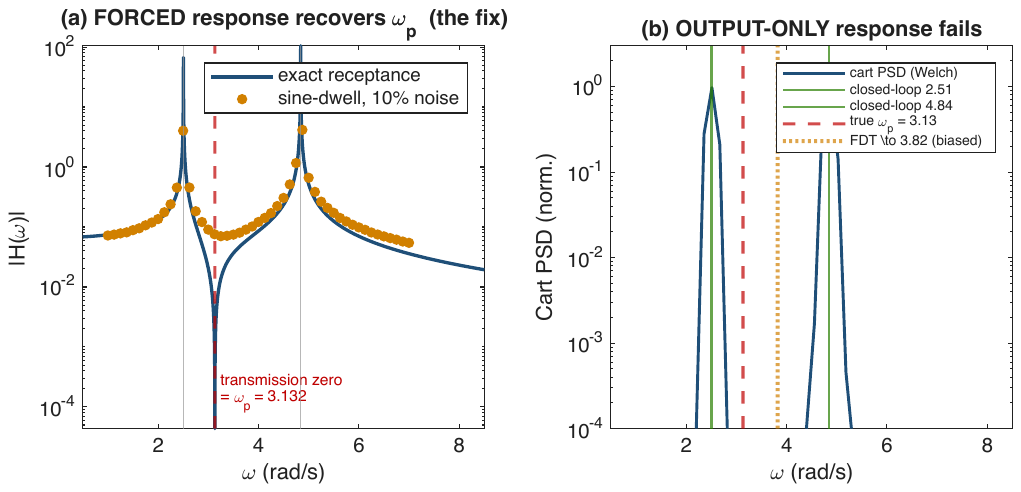}
\caption{Cart--pendulum (slosh analog), reactive coupling, $g=9.81$; hidden
  $\omega_p=\sqrt{g/l}=3.13\,\mathrm{rad/s}$.  \textbf{(a)} forced receptance
  $|H(\omega)|$ with sine-dwell samples ($10\%$ noise); \textbf{(b)} output-only
  cart spectrum.  See text for details.\label{fig:cartpend}}
\end{figure}

\section{Comparison with existing methods}
\label{sec:comparison}

The deliverable of PRISM is a \emph{governing equation}: the closed
second-order equation of motion for the observable,
Eq.~\eqref{eq:MZ}, equivalently the minimal Hamiltonian
Eq.~\eqref{eq:reconstructed_ham} that generates it.  The relevant
comparison is therefore with other methods that discover governing
equations from data; we group them by what they require and what they
return (Table~\ref{tab:comparison}).  Pure spectral estimators
(Prony, ESPRIT, matrix pencil~\cite{hua1990matrix}) are \emph{not} in
this class: they return a list of frequencies and per-channel
amplitudes, not an equation of motion, and so cannot be integrated
forward, perturbed, or interrogated for the observed--hidden coupling.
On their own task---extracting a sum of sinusoids---they are mature and
accurate (on the Benchmark~1 data ESPRIT and matrix pencil match PRISM's
frequencies to within $0.01\%$), but recovering frequencies is a
\emph{sub-step} of PRISM, not its output, so we do not compare against
them as equation-discovery methods.

\begin{table}[ht!]
\centering
\caption{Comparison of governing-equation discovery methods. Deliverable
is an equation of motion, not a spectrum; pure spectral estimators
(ESPRIT, etc.) are excluded as they return no dynamical model (see text).
PRISM uniquely recovers an interpretable, passive, physical model from
partial observations with no state reconstruction or trained network.
Key: \checkmark\ yes, $\times$ no, $\sim$ approximate/optional,
``--'' not applicable. DD = deep delay.
\label{tab:comparison}}
\setlength{\tabcolsep}{6pt}
\renewcommand{\arraystretch}{1.15}
\footnotesize
\begin{tabular}{lcccc}
\toprule
                          & SINDy & Koop. & DD & \textbf{PRISM} \\
\midrule
Equation of motion        & \checkmark & $\sim$ & \checkmark & \textbf{\checkmark} \\
Partial observation       & $\times$  & $\times$ & \checkmark & \textbf{\checkmark} \\
No reconstruction         & --         & --     & $\times$  & \textbf{\checkmark} \\
No delay param.\ $\tau$    & --         & --     & $\times$  & \textbf{\checkmark} \\
No trained network        & \checkmark & $\sim$ & $\times$  & \textbf{\checkmark} \\
Interpretable model       & \checkmark & $\times$ & $\times$ & \textbf{\checkmark} \\
Passivity guarantee       & $\times$  & $\times$ & $\times$ & \textbf{\checkmark} \\
\bottomrule
\end{tabular}
\end{table}

Within this class, PRISM occupies a position no existing method
reaches.  SINDy and Koopman/EDMD recover dynamics but require the
\emph{full} state; on a partial observation they must first reconstruct
the hidden coordinates, which is the very step PRISM avoids.  Deep delay
methods do operate on partial observations and return a forward model,
but in learned delay coordinates through a trained network: the result
is a black-box predictor, not the physical Hamiltonian, and carries no
conservation or passivity guarantee.  PRISM instead returns the minimal
physical Hamiltonian directly---interpretable coupling vectors between
observed and hidden coordinates, with passivity enforced by the
$\mat{R}_i\succeq0$ structure---and, because it is an equation of motion,
it can be integrated forward: the forecast panels of
Figs.~\ref{fig:chain} and \ref{fig:henon} are precisely this test, a
recovered model propagated beyond its fitting window, which a spectral
description cannot produce.

We are explicit about the scope of this advantage.  PRISM is superior
\emph{for the problem it targets}---partially observed conservative
systems with harmonic hidden modes (Assumption~\ref{ass:A})---and
not in general.  SINDy addresses arbitrary nonlinearities outside
PRISM's linear-coupling structure; neural methods scale to higher
dimensions than the rational-kernel fit; and where the hidden dynamics
are themselves strongly nonlinear, PRISM recovers an effective
linearization rather than the exact model (Benchmarks~2 and 3).  The
claim is not that PRISM dominates equation discovery, but that for
conservative partially observed systems it uniquely returns an
interpretable, passive, physical equation of motion without
reconstruction, embedding, or a trained network.

\section{Applications}
\label{sec:applications}

PRISM applies to any system whose full state is
Hamiltonian and whose observable is a low-dimensional projection.  In
Hamiltonian fluid~\cite{morrison1998hamiltonian} and
plasma~\cite{freidberg2014ideal} models, the residues track coherent or
tearing-mode frequencies from a single accessible field probe; in open
quantum systems~\cite{caldeira1983quantum}, the kernel coincides with the
bath spectral density (strong-coupling recovery requires inverting the
system--bath hybridization, left to future work).  The same structure
governs engineering settings, though not always through the output-only route.
Fuel slosh and gyroscopic coupling are \emph{reactive}
(Section~\ref{sec:reactive}): a spacecraft slosh
mode~\cite{abramson1966dynamic} appears as a transmission zero of the
gyro-rate receptance, recovered under forced maneuvering rather than from the
free-response autocorrelation.  In aeroelastic flutter
prediction~\cite{bisplinghoff1955aeroelasticity}, the index
$\mu(V)=\min_i\lambda_\mathrm{min}(\mat{R}_i(V))\to0^+$ as airspeed approaches
$V_F$ provides a real-time precursor.  Extensions to
dissipative hidden modes and time-varying parameters are discussed
in the appendices.


\section{Conclusion}
\label{sec:conclusion}

PRISM exploits the Hamiltonian structure of hidden physics to discover
governing equations without reconstructing the hidden state.  Three
pillars---the exact MZ projection, the lossless positive-real kernel-structure
theorem, and sparse identification of the pole--residue parameters---make the
reduced model passive by construction, and, where the microscopic Hamiltonian
is identifiable, unique and minimal.

Validation on three benchmarks spanning a linear chain, a nonlinear
lattice, and a chaotic Hamiltonian system
demonstrates robust frequency recovery under realistic noise, and
coupling (residue) recovery under the equipartition/observability
condition of Theorem~\ref{thm:identifiability}.
The positive-semidefinite residue structure is the single object that
enforces the physical guarantees; that it can be imposed by convex
(non-negative least-squares, or semidefinite) programming is what makes
the method practical.

The central result is a governing equation, not a spectrum: from the output
autocorrelation of a partially observed conservative system we recover a
closed, passive, interpretable governing equation for the observed dynamics---the
Mori--Zwanzig memory-kernel equation---whose closed-loop normal-mode
frequencies are the peaks of that autocorrelation and which forecasts the
observable beyond its fitting window.  The bare hidden Hamiltonian follows as a
conditional bonus: through the memory-kernel
deconvolution~\eqref{eq:kernel_from_corr} when the observability Gramian
$\mat{W}_o\succ0$ and the data sample an equipartitioned measure (position
coupling), or through a forced input--output experiment when the coupling is
reactive.  What is cheap---the reduced governing equation---and what is
expensive---the microscopic Hamiltonian---are thereby separated cleanly.
The same structure recurs across mechanics, fluid and plasma physics, and
molecular dynamics, which the framework treats on a common footing.

\appendix

\section{Proof of Theorem~\ref{thm:kernel_structure} (Kernel Structure)}
\label{app:proof1}

We derive the
equations of motion directly from the full Hamiltonian~\eqref{eq:full_ham},
\begin{equation}
  H = H_\obs(\vect{x},\dot{\vect{x}})
    + \sum_{i=1}^{n_\hid}\!\left(\tfrac{1}{2}p_i^2+\tfrac{1}{2}\omega_i^2 q_i^2\right)
    + \sum_{i=1}^{n_\hid} \vect{x}^\top \vect{c}_i\,q_i,
  \label{eq:SHfull}
\end{equation}
with $H_\obs = \tfrac{1}{2}\dot{\vect{x}}^\top \dot{\vect{x}} + V(\vect{x})$
and $M=I$.  Hamilton's equations for the observable position
$\vect{x}=\vect{q}_\obs$ give $\dot{\vect{x}}=\partial H/\partial\vect{p}_\obs$
and
\begin{equation}
  \dot{\vect{p}}_\obs = -\frac{\partial H}{\partial \vect{x}}
    = -\nabla_{\vect{x}} V(\vect{x})
    - \sum_{i=1}^{n_\hid} \vect{c}_i\,q_i.
  \label{eq:Spobs_dot}
\end{equation}
Since $\dot{\vect{p}}_\obs = \ddot{\vect{x}}$ and
$\vect{g}\equiv -\nabla_{\vect{x}}V$, this gives
\begin{equation}
  \ddot{\vect{x}}(t) = \vect{g}(\vect{x},\dot{\vect{x}})
    - \sum_{i=1}^{n_\hid}\vect{c}_i\,q_i(t).
  \label{eq:Sobs}
\end{equation}
Hamilton's equations for the $i$-th hidden mode give
$\dot{q}_i = \partial H/\partial p_i = p_i$ and
$\dot{p}_i = -\partial H/\partial q_i = -\omega_i^2 q_i - \vect{c}_i^\top\vect{x}$;
differentiating $\dot q_i=p_i$ once more yields
\begin{equation}
  \ddot{q}_i + \omega_i^2\,q_i = -\vect{c}_i^\top\vect{x}(t).
  \label{eq:Shid}
\end{equation}
Equations~\eqref{eq:Sobs}--\eqref{eq:Shid} are the starting point.

\subsection{Variation of parameters}
The general solution of the forced
oscillator~\eqref{eq:Shid} is
\begin{align}
  q_i(t) &= \underbrace{q_i(0)\cos\omega_i t
    + \frac{\dot{q}_i(0)}{\omega_i}\sin\omega_i t}_{r_i(t)} \nonumber\\
    &\quad - \frac{\vect{c}_i^\top}{\omega_i}
      \int_0^t \sin\bigl(\omega_i(t-s)\bigr)\vect{x}(s)\,ds.
  \label{eq:Svop}
\end{align}

\subsection{Substitution}

Inserting~\eqref{eq:Svop} into~\eqref{eq:Sobs}:
\begin{align}
  \ddot{\vect{x}}(t)
  &= \vect{g} + \int_0^t
    \underbrace{\sum_{i}\frac{\vect{c}_i\vect{c}_i^\top}{\omega_i}
    \sin\bigl(\omega_i(t-s)\bigr)}_{=\,\mat{K}(t-s)}
    \vect{x}(s)\,ds
    - \underbrace{\sum_i\vect{c}_i r_i(t)}_{=\,-\vect{F}(t)}.
\end{align}
This is exactly Eq.~\eqref{eq:MZ} with:
\begin{equation}
  \mat{K}(t) = \sum_{i=1}^{n_\hid}
    \frac{\mat{R}_i}{\omega_i}\sin(\omega_i t),
  \qquad \mat{R}_i = \vect{c}_i\vect{c}_i^\top \succeq 0.
  \label{eq:Skernel_time}
\end{equation}

\subsection{Laplace transform}

Using $\mathcal{L}\{\sin(\omega_i t)\}=\omega_i/(s^2+\omega_i^2)$:
\begin{equation}
  \hat{\mat{K}}(s) = \sum_{i=1}^{n_\hid}
    \frac{\mat{R}_i}{s^2+\omega_i^2}.
\end{equation}
Poles at $s=\pm \iu\omega_i$ lie on the imaginary axis (no dissipation).

\subsection{Lossless passive character}
The residue at $s=+\iu\omega_i$ is:
\begin{equation}
  \mathrm{Res}_{s=\iu\omega_i}\hat{\mat{K}}(s)
  = \frac{\mat{R}_i}{2\iu\omega_i}, \qquad \mat{R}_i\succeq 0.
\end{equation}
Two properties follow.  First, all poles lie on the imaginary axis
$s=\pm\iu\omega_i$ (no real part), so the kernel is \emph{lossless}: it
dissipates no energy.  Second, the residues are positive semidefinite, so
the spectral density
\begin{equation}
  \mat{S}(\omega)=\sum_i\mat{R}_i\,\delta(\omega-\omega_i)
\end{equation}
is a positive matrix measure.  By the matrix-valued Bochner
theorem~\cite{bochner1932vorlesungen} this is equivalent to a
positive-semidefinite autocorrelation, certifying that the kernel is
\emph{passive}: energy is stored and exchanged among the hidden modes but
never generated.  Together, the kernel is lossless and passive --- the
reactance (lossless positive-real) case~\cite{anderson1967network,youla1961modern}
--- consistent with its conservative origin.
\hfill$\square$

\section{Proof of Theorem~\ref{thm:identifiability} (Identifiability)}
\label{app:proof2}

Define
$\mat{C}(\tau)=\Expect{\vect{x}(t)\vect{x}(t+\tau)^\top}$ (independent of
$t$ by stationarity).
To obtain its equation of motion, right-multiply the Mori--Zwanzig
equation~\eqref{eq:MZ} by $\vect{x}(0)^\top$ and take the ensemble average
$\Expect{\cdot}$ over the (equipartitioned) initial measure.  Linearizing
the observable self-force about the operating point $\vect{x}_0$ (an
equilibrium of the observable subsystem),
$\vect{g}(\vect{x})=-\nabla_{\vect{x}}V \approx \mat{G}\,\vect{x}$ with the
observable stiffness $\mat{G}=-\nabla^2_{\vect{x}}V|_{\vect{x}_0}$ (exact
when $V$ is quadratic, as in the linear benchmark; $\mat{G}=\mat{0}$ for
free or gyroscopic observables).  The average is linear and commutes with
the $\tau$-derivative and the convolution, so
\begin{align}
  \Expect{\ddot{\vect{x}}(\tau)\vect{x}(0)^\top}
  &= \mat{G}\,\Expect{\vect{x}(\tau)\vect{x}(0)^\top}
  + \int_0^\tau\!\mat{K}(\tau-s)\Expect{\vect{x}(s)\vect{x}(0)^\top}ds \nonumber\\
  &\quad + \Expect{\vect{F}(\tau)\vect{x}(0)^\top}.
\end{align}
Identifying $\Expect{\vect{x}(\tau)\vect{x}(0)^\top}=\mat{C}(\tau)$,
$\Expect{\ddot{\vect{x}}(\tau)\vect{x}(0)^\top}=\ddot{\mat{C}}(\tau)$
(the derivative acts on $\tau$ alone), and using
$\Expect{\vect{F}(\tau)\vect{x}(0)^\top}=0$ (the fluctuation force depends
only on the initial hidden conditions, uncorrelated with $\vect{x}(0)$ by
assumption), gives
\begin{equation}
  \ddot{\mat{C}}(\tau)=\mat{G}\mat{C}(\tau)
    +\int_0^\tau\mat{K}(\tau-s)\mat{C}(s)\,ds.
  \label{eq:Scorr}
\end{equation}

\subsection{Laplace inversion}

Taking the Laplace transform of~\eqref{eq:Scorr} and solving for
$\hat{\mat{K}}(s)$:
\begin{equation}
  \hat{\mat{K}}(s) = \bigl[s^2\hat{\mat{C}}(s) - s\mat{C}(0)
    - \mat{\Lambda}\mat{C}(0) - \mat{G}\hat{\mat{C}}(s)\bigr]
    \hat{\mat{C}}^{-1}(s).
  \label{eq:Skinv}
\end{equation}

\begin{lemma}
$\hat{\mat{C}}(s)$ is invertible for $\mathrm{Re}(s)>0$ iff
$\mat{W}_o\succ0$.
\end{lemma}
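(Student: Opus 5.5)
The plan is to make the statement concrete in the linear (quadratic-$V$) setting used to derive Eq.~\eqref{eq:Skinv}. There the full dynamics are $\dot{\vect{z}}=\mat{A}\vect{z}$ with output $\vect{x}=\mat{\Gamma}\vect{z}$, and stationarity gives
\[
\mat{C}(\tau)=\mat{\Gamma}e^{\mat{A}\tau}\mat{P}\mat{\Gamma}^\top ,
\]
where $\mat{P}=\Expect{\vect{z}\vect{z}^\top}\succ0$ is the (equipartitioned) state covariance. Because the flow is Hamiltonian and undamped, $\mat{A}$ is similar to a skew-symmetric matrix in energy coordinates: it has purely imaginary eigenvalues and $e^{\mat{A}t}$ is bounded. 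Consequently the Laplace transform converges for $\mathrm{Re}(s)>0$ and equals
\[
\hat{\mat{C}}(s)=\mat{\Gamma}(s\mat{I}-\mat{A})^{-1}\mat{P}\mat{\Gamma}^\top .
\]
I would treat $\mat{W}_o$ in the sense relevant here: since $\mat{A}$ is undamped, the infinite-horizon integral in \eqref{eq:gramian} must be read either as a finite-horizon Gramian or through its kernel. Its kernel is the unobservable subspace, so $\mat{W}_o\succ0$ means exactly that $(\mat{\Gamma},\mat{A})$ is observable. I will state this reading explicitly at the start.

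Step 1 (sufficiency). Diagonalize $\mat{A}$ over $\mathbb{C}$ with eigenvalues $\pm\iu\Omega_j$. With equipartition, $\mat{P}$ is the inverse of the energy metric, and in modal coordinates each mode contributes a rank-one term. This gives
\[
\hat{\mat{C}}(s)=\sum_j \frac{s\,\mat{B}_j}{s^2+\Omega_j^2},\qquad \mat{B}_j=\mat{\Gamma}\vect{\phi}_j\vect{\phi}_j^\top\mat{\Gamma}^\top\succeq0 ,
\]
which is consistent with the cosine form $\mat{C}(\tau)=\sum_j\mat{B}_j\cos\Omega_j\tau$ used in Phase~1. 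For $\mathrm{Re}(s)>0$ the scalar $s/(s^2+\Omega_j^2)$ has strictly positive real part, because it is a positive-real function with imaginary-axis poles only. Take $\vect{v}\ne0$ with $\hat{\mat{C}}(s)\vect{v}=0$. Then
\[
0=\mathrm{Re}\,\vect{v}^*\hat{\mat{C}}(s)\vect{v}=\sum_j \mathrm{Re}\frac{s}{s^2+\Omega_j^2}\,|\vect{\phi}_j^\top\mat{\Gamma}^\top\vect{v}|^2 ,
\]
so $\vect{v}\perp\mat{\Gamma}\vect{\phi}_j$ for every $j$. The remaining task is to convert the condition that the vectors $\{\mat{\Gamma}\vect{\phi}_j\}$ span $\R^{n_\obs}$ into observability. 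By the Popov--Belevitch--Hautus test, observability gives $\mat{\Gamma}\vect{\phi}_j\neq0$ for every mode. Moreover, $\mat{\Gamma}$ has full row rank (it selects coordinates), and the $\vect{\phi}_j$ span the whole state space, so their images under $\mat{\Gamma}$ span $\R^{n_\obs}$. This forces $\vect{v}=0$.

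Step 2 (necessity). Suppose $\mat{W}_o$ is singular. Then some eigenvector satisfies $\mat{\Gamma}\vect{\phi}_k=0$; that mode drops out of $\hat{\mat{C}}$, and the kernel inversion \eqref{eq:Skinv} cannot see it. The catch is that this does \emph{not} by itself make $\hat{\mat{C}}(s)$ singular; it only makes some hidden mode invisible. So the ``only if'' direction as literally stated is doubtful. I expect this to be the main obstacle. The first thing I would try is to reinterpret the equivalence as invertibility of $\hat{\mat{C}}$ \emph{together with} the identity that $\hat{\mat{C}}$ carries the full pole set of $\mat{A}$ (i.e.\ minimality of the realization $(\mat{A},\mat{P}\mat{\Gamma}^\top,\mat{\Gamma})$). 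If that fails, the fallback is to prove the sufficiency direction rigorously and record the converse under the additional hypothesis that every unobservable mode would also be needed to make the output spectrum rank $n_\obs$.

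The delicate points are therefore the Gramian interpretation for an undamped $\mat{A}$ and the necessity direction. The positivity argument in Step 1 is routine once the modal expansion with $\mat{B}_j\succeq0$ is in place.
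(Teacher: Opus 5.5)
Your Step~1 is sound, but look at what its last line actually uses. It needs only three things: $\mat{\Gamma}$ has full row rank, the mode shapes $\{\vect{\phi}_j\}$ are complete, and every mode carries positive weight (equipartition, $\mat{P}\succ0$). The PBH appeal to observability plays no role. So you have shown that $\hat{\mat{C}}(s)$ is invertible on $\mathrm{Re}(s)>0$ \emph{whether or not} $\mat{W}_o\succ0$. This settles Step~2 in the negative: the ``only if'' direction is false, not merely doubtful.

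Concretely, take $n_{\obs}=n_{\hid}=1$, $V(x)=\tfrac12\Omega_0^2x^2$ with $\Omega_0>0$, and $\vect{c}_1=\vect{0}$. A less trivial example is two identical hidden oscillators with $\vect{c}_1=\vect{c}_2$, whose antisymmetric combination decouples. In the first example the hidden directions lie in $\ker\mat{\Gamma}e^{\mat{A}t}$ for every $t$, so $\mat{W}_o$ is singular under any reading (finite-horizon Gramian or its kernel). Yet $\hat{C}(s)=\Expect{x^2}\,s/(s^2+\Omega_0^2)\neq0$ throughout the open right half-plane. No reinterpretation rescues the equivalence as a statement about invertibility.

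Your first fallback, read as minimality of $(\mat{A},\mat{P}\mat{\Gamma}^\top,\mat{\Gamma})$, is true. (Read as a mere pole \emph{set} it is weaker when frequencies repeat.) It is true only because its invertibility conjunct is automatic. Let $\mat{E}\succ0$ be the Hessian of the quadratic Hamiltonian. Then $\mat{A}=\mathbf{J}\mat{E}$ and $\mat{P}\propto\mat{E}^{-1}$ give $\mat{E}\mat{A}\mat{E}^{-1}=-\mat{A}^\top$, so controllability of $(\mat{A},\mat{P}\mat{\Gamma}^\top)$ is dual to observability of $(\mat{\Gamma},\mat{A})$. That is what the paper's uniqueness argument really needs (no hidden mode invisible in $\hat{\mat{K}}$), but it is a different lemma.

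Nor is $\mat{W}_o\succ0$ sufficient outside equipartition. A single trajectory that leaves modes unexcited can make $\hat{\mat{C}}(s)$ rank-deficient even when $(\mat{\Gamma},\mat{A})$ is observable; for example, $n_{\obs}=2$ with one excited mode. What governs invertibility is whether the observed shapes of the \emph{excited} modes span $\R^{n_{\obs}}$. Under the paper's literal wording, with $\mat{A}$ the state matrix of the observable subsystem alone, that pair is always observable. The lemma is then true but vacuous, and your Step~1 is its whole proof.

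Do not look to the paper's own argument for the converse; it does not establish it. It asserts $\mathrm{tr}(\mat{W}_o)=\int_0^\infty\|\mat{C}(\tau)\|_F^2\,d\tau$ ``by Parseval,'' then claims that positivity of this quantity is equivalent to full column rank of $\hat{\mat{C}}$. That argument has four problems:
\begin{itemize}
\item $\mathrm{tr}(\mat{W}_o)=\int_0^\infty\|\mat{\Gamma}e^{\mat{A}t}\|_F^2\,dt$ is a different quantity.
\item For undamped $\mat{A}$ both integrals diverge (the issue you flagged).
\item Positivity of a trace is far weaker than $\mat{W}_o\succ0$.
\item The final equivalence is asserted rather than proved, and it fails in one direction by the example above.
\end{itemize}

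Your positive-real mechanism is the right one. To cover gyroscopic observables too, run it basis-free; there $\mat{C}(\tau)$ acquires antisymmetric sine terms and your cosine form breaks. In energy coordinates the generator $\mat{S}=\mat{E}^{1/2}\mathbf{J}\mat{E}^{1/2}$ is skew. With $\vect{u}=\mat{E}^{-1/2}\mat{\Gamma}^\top\vect{v}$ you have $\vect{v}^*\hat{\mat{C}}(s)\vect{v}\propto\vect{u}^*(s\mat{I}-\mat{S})^{-1}\vect{u}$. Then $\mathrm{Re}\,\vect{u}^*(s\mat{I}-\mat{S})^{-1}\vect{u}=\mathrm{Re}(s)\,\|(s\mat{I}-\mat{S})^{-*}\vect{u}\|^2>0$ whenever $\vect{u}\neq\vect{0}$.
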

\begin{proof}
By Parseval's theorem, $\mathrm{tr}(\mat{W}_o)=\int_0^\infty
\|\mat{C}(\tau)\|_F^2\,d\tau>0$ iff the observable orbit is not confined
to a proper invariant subspace, which is equivalent to
$\hat{\mat{C}}(s)$ having full column rank in the right half-plane.
\end{proof}

Eq.~\eqref{eq:Skinv} and the Lemma together show $\hat{\mat{K}}(s)$
is uniquely determined by $\mat{C}(\tau)$ when $\mat{W}_o\succ0$.
Uniqueness of the partial-fraction decomposition follows from the
uniqueness of Laurent coefficients at isolated poles of rational
matrices~\cite{gantmacher1959}.\hfill$\square$

\subsection{Sufficient data length}

For $n_\hid$ hidden modes with minimum modal separation
$\Delta\omega_\mathrm{min}=\min_{i\neq j}|\omega_i-\omega_j|$,
identifiability requires $T\geq\pi/\Delta\omega_\mathrm{min}$.

\subsection{Modal decomposition and the role of equipartition}
The
preceding inversion recovers the pole-residue pair $(\omega_i,\mat{R}_i)$,
but the conditions under which the frequency, the residue direction, and
the residue magnitude are recoverable differ, and the distinction is made
precise by expanding $\mat{C}(\tau)$ in normal modes.  Writing the
observable as $\vect{x}(t)=\sum_j \vect{v}_j\,a_j(t)$, where $a_j$ is the
$j$-th \emph{closed-loop} normal-mode amplitude (a root of
$\det(s^2\mat{I}-\mat{G}-\hat{\mat{K}}(s))=0$, in general distinct from any
bare hidden frequency $\omega_i$ of Theorem~\ref{thm:kernel_structure}) and
$\vect{v}_j$ its shape on the observed
coordinates, and using that distinct modes are uncorrelated under a
stationary measure,
$\Expect{a_j(t+\tau)a_l(t)}=\delta_{jl}\Expect{a_j(t+\tau)a_j(t)}$, gives
\begin{equation}
  \mat{C}(\tau)=\sum_j \vect{v}_j\vect{v}_j^\top
    \,\Expect{a_j(t+\tau)\,a_j(t)}.
\end{equation}
Each mode is an undamped oscillator, whose stationary autocorrelation is
$\Expect{a_j(t+\tau)a_j(t)}=\Expect{a_j^2}\cos\Omega_j\tau$; by the virial
theorem $\Expect{E_j}=\Omega_j^2\Expect{a_j^2}$, so
\begin{equation}
  \mat{C}(\tau)=\sum_j \mat{R}_j\,
    \frac{\Expect{E_j}}{\Omega_j^2}\,\cos\Omega_j\tau,
  \qquad \mat{R}_j=\vect{v}_j\vect{v}_j^\top.
  \label{eq:modal}
\end{equation}
The single-mode relations
$\Expect{a_j\dot{a}_j}=0$ and $\Expect{\dot{a}_j^2}=\Omega_j^2\Expect{a_j^2}$
used here are not additional hypotheses: they follow from averaging an
oscillation $a_j(t)=A_j\cos(\Omega_j t+\phi_j)$ over its phase (equivalently,
over time along a sufficiently long record, $T\geq\pi/\Delta\omega_\mathrm{min}$),
and hold for any modal amplitude $A_j$.  The modal energies $\Expect{E_j}$
therefore enter only through the residue magnitudes.  Three consequences
follow.  (i)~The $\{\Omega_j\}$ are the closed-loop cosine
frequencies, recoverable from any persistently exciting measurement
regardless of the modal energies.  (ii)~Each term's matrix \emph{direction}
is $\vect{v}_j\vect{v}_j^\top=\mat{R}_j$, independent of $\Expect{E_j}$, so
the residue \emph{structure} (coupling direction) is likewise recoverable
from any such measurement.  (iii)~The residue \emph{magnitudes} carry the
factor $\Expect{E_j}/\Omega_j^2$; after de-tilting by $\Omega_j^2$ the
recovered amplitude is $\mat{R}_j\Expect{E_j}$, weighted by the modal
energy.  The relative magnitudes across modes are therefore correct only
under \emph{equipartition}, $\Expect{E_j}\equiv E$, which makes the
weighting uniform and fixes all residues up to a single global constant
(set by $\mat{C}(0)$).  Equipartition is thus required only for
residue-magnitude recovery; frequencies and coupling directions need only
persistent excitation.  The bare $\{\omega_i\}$ of
Theorem~\ref{thm:kernel_structure} are recovered from
$\{\Omega_j,\mat{R}_j\}$ only via the deconvolution of
\ref{app:fdt}, not from this modal decomposition directly.

\section{Proof of Reconstruction Uniqueness}
\label{app:proof3}

Each
$\mat{R}_i\succeq0$ (from Theorem~\ref{thm:kernel_structure}), so eigendecompose:
$\mat{R}_i=\mat{U}_i\,\mathrm{diag}(\lambda_{i,k})\,\mat{U}_i^\top$,
$\lambda_{i,k}\geq0$.
Then $\vect{c}_i=\sqrt{\lambda_{i,1}}\,\vect{u}_{i,1}$ (rank-1 case)
recovers the coupling vector up to sign convention.

\subsection{Uniqueness}

Suppose two Hamiltonians $\Ham^{(1)}$ and $\Ham^{(2)}$ produce the
same $\mat{C}(\tau)$.
By~\eqref{eq:Skinv} they produce the same $\hat{\mat{K}}(s)$.
The partial-fraction representations must then coincide
(uniqueness of Laurent coefficients~\cite{gantmacher1959}), so
$N^{(1)}=N^{(2)}$, $\omega_i^{(1)}=\omega_{\sigma(i)}^{(2)}$,
and $\mat{R}_i^{(1)}=\mat{R}_{\sigma(i)}^{(2)}$ for a permutation $\sigma$.
The Hamiltonians are identical up to mode relabeling and the
$\pm\vect{c}_i$ sign ambiguity (which leaves $\Ham$ unchanged).

\subsection{Bochner uniqueness}

The matrix-valued Bochner theorem~\cite{bochner1932vorlesungen} states
that any positive-semidefinite matrix-valued function $\mat{C}(\tau)$
has a unique spectral representation:
$\mat{C}(\tau)=\int e^{j\omega \tau}d\mat{F}(\omega)$,
where $\mat{F}(\omega)$ is a positive-semidefinite matrix measure.
For our system $d\mat{F}$ is supported on $\{\pm\omega_i\}$,
establishing a one-to-one correspondence between $\mat{C}(\tau)$ and
the Hamiltonian parameters.\hfill$\square$

\section{Fluctuation--Dissipation Identity}
\label{app:fdt}

The exact inversion
used in Section~\ref{sec:identifiability} follows from the generalized fluctuation--dissipation
relation.

\begin{lemma}[Generalized FDT]
Assume (i) the self-force is linearized about the operating point,
$\vect{g}(\vect{x})\approx\mat{G}\vect{x}$ with
$\mat{G}=-\nabla^2_{\vect{x}}V|_{\vect{x}_0}$ (exact for quadratic $V$), and
(ii) the hidden initial conditions are uncorrelated with the observable at
equal time, $\Expect{q_i(0)\vect{x}(0)^\top}=\Expect{\dot{q}_i(0)\vect{x}(0)^\top}=0$
(equivalently $\Expect{\vect{F}(\tau)\vect{x}(0)^\top}=0$, since the
fluctuation force is a linear combination of the propagated hidden initial
conditions).  Then the kernel is recovered from the autocorrelation by the
deconvolution
\begin{align}
  \hat{\mat{K}}(s) &= \bigl[s^2\hat{\mat{C}}(s) - s\mat{C}(0)
    - \mat{\Lambda}\mat{C}(0) - \mat{G}\hat{\mat{C}}(s)\bigr]
    \hat{\mat{C}}^{-1}(s), \label{eq:Sfdt}\\
  \mat{K}(\tau) &= \mathcal{L}^{-1}\{\hat{\mat{K}}(s)\},
  \label{eq:Sfdt_inv}
\end{align}
with streaming matrix $\mat{\Lambda}=\dot{\mat{C}}(0)\mat{C}^{-1}(0)$.
\end{lemma}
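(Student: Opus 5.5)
The plan is to derive a closed linear Volterra equation for the correlation matrix directly from the exact Mori--Zwanzig equation~\eqref{eq:MZ}, then solve it in the Laplace domain. Throughout I will use the convention $\mat{C}(\tau)=\Expect{\vect{x}(\tau)\vect{x}(0)^\top}$, as in \ref{app:proof2}. Stationarity makes this the same object as $\Expect{\vect{x}(t+\tau)\vect{x}(t)^\top}$. Before averaging I would fix this orientation, since with the transposed convention the kernel would multiply from the other side.

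\emph{Step 1: project onto $\vect{x}(0)$.} Under hypothesis (i), Eq.~\eqref{eq:MZ} becomes
\[
\ddot{\vect{x}}(\tau)=\mat{G}\vect{x}(\tau)+\int_0^\tau\mat{K}(\tau-s)\vect{x}(s)\,ds+\vect{F}(\tau).
\]
I right-multiply by $\vect{x}(0)^\top$ and take $\Expect{\cdot}$. The average has to commute with $d^2/d\tau^2$ and with the convolution integral. I will justify this by dominated convergence, because the trajectories are bounded:
\begin{itemize}
\item in the linear (quadratic-$V$) setting, energy conservation bounds $\vect{x}$ and $\dot{\vect{x}}$;
\item the equations of motion then bound $\ddot{\vect{x}}$ as well.
\end{itemize}
This gives
\[
\Expect{\ddot{\vect{x}}(\tau)\vect{x}(0)^\top}=\ddot{\mat{C}}(\tau).
\]

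\emph{Step 2: kill the noise term.} From the proof of Theorem~\ref{thm:kernel_structure} (\ref{app:proof1}), the fluctuation force is
\[
\vect{F}(\tau)=-\sum_i\vect{c}_i r_i(\tau),
\qquad r_i(\tau)=q_i(0)\cos\omega_i\tau+\frac{\dot q_i(0)}{\omega_i}\sin\omega_i\tau .
\]
This is an explicit linear combination of the hidden initial data, with deterministic coefficients. Hypothesis (ii) therefore gives $\Expect{\vect{F}(\tau)\vect{x}(0)^\top}=0$ for every $\tau$. This yields Eq.~\eqref{eq:Scorr}.

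\emph{Step 3: Laplace transform and solve for the kernel.} $\mat{C}$ is bounded, so $\hat{\mat{C}}(s)$ exists for $\mathrm{Re}(s)>0$. The transforms needed are:
\begin{itemize}
\item the second-derivative rule, $\mathcal{L}\{\ddot{\mat{C}}\}=s^2\hat{\mat{C}}-s\mat{C}(0)-\dot{\mat{C}}(0)$;
\item the convolution theorem, $\mathcal{L}\{\int_0^\tau\mat{K}(\tau-s)\mat{C}(s)ds\}=\hat{\mat{K}}(s)\hat{\mat{C}}(s)$, with the kernel on the left, consistent with the chosen convention.
\end{itemize}
Writing $\dot{\mat{C}}(0)=\mat{\Lambda}\mat{C}(0)$ is just the definition of $\mat{\Lambda}$. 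This requires $\mat{C}(0)=\Expect{\vect{x}\vect{x}^\top}\succ0$, which I will note holds whenever every observed channel is excited. We then obtain
\[
\hat{\mat{K}}(s)\hat{\mat{C}}(s)=s^2\hat{\mat{C}}(s)-s\mat{C}(0)-\mat{\Lambda}\mat{C}(0)-\mat{G}\hat{\mat{C}}(s).
\]
Right-multiplying by $\hat{\mat{C}}^{-1}(s)$ gives~\eqref{eq:Sfdt}. The inverse exists on $\mathrm{Re}(s)>0$ by the invertibility Lemma of \ref{app:proof2} when $\mat{W}_o\succ0$. Finally, $\mat{K}$ is continuous (by Theorem~\ref{thm:kernel_structure} it is a finite sum of sines), so Lerch's uniqueness theorem gives~\eqref{eq:Sfdt_inv}. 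The formula can also be continued meromorphically to the imaginary axis, where the poles sit.

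\emph{Main obstacle.} I expect the delicate point to be Step 2 together with the interchange in Step 1, not the algebra. Hypothesis (ii) must hold for the actual averaging measure. Under a canonical (Gibbs) measure with coupling $\vect{x}^\top\vect{c}_iq_i$, the equal-time covariance $\Expect{q_i\vect{x}^\top}$ is generally nonzero. So I would either take (ii) literally as a hypothesis on the initial ensemble, for example a product measure, or handle it as in the classical Mori construction. In the Mori construction the projection is chosen so that $\vect{F}\perp\vect{x}(0)$ by design. The price is that $\vect{x}$ in $\mathcal{H}_c$ is shifted to the orthogonal complement, which only renormalizes $\mat{G}$ by a static term. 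I would first try the product-measure reading, since it makes Step 2 exact with no change to $\mat{G}$.
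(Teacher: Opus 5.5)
Your proposal is correct and follows the paper's own argument. You right-multiply the linearized Mori--Zwanzig equation by $\vect{x}(0)^\top$, use (ii) to kill $\Expect{\vect{F}(\tau)\vect{x}(0)^\top}$ and obtain Eq.~\eqref{eq:Scorr}, then Laplace-transform and right-multiply by $\hat{\mat{C}}^{-1}(s)$. Two things you add go beyond the paper: you fix the orientation $\mat{C}(\tau)=\Expect{\vect{x}(\tau)\vect{x}(0)^\top}$, which the paper glosses over, and you note that (ii) fails under a coupled Gibbs measure.

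Two asides need correcting, though neither is essential to your proof.
\begin{itemize}
\item Energy conservation bounds trajectories only when the full quadratic Hamiltonian is positive definite. This fails, for example, for $\mat{G}=\mat{0}$ with some $\vect{c}_i\neq\vect{0}$: the Schur complement $-\sum_i\mat{R}_i/\omega_i^2$ then forces an exponentially growing mode. Use the exponential bound available for any linear flow, take transforms on $\mathrm{Re}(s)>\sigma_0$, and continue meromorphically.
\item Under a Gibbs measure, $\Expect{\vect{F}(\tau)\vect{x}(0)^\top}=\Phi(\tau)\mat{C}(0)$ with $\Phi(\tau)=\sum_i\mat{R}_i\cos(\omega_i\tau)/\omega_i^2$. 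The Mori reformulation absorbs this term only by passing to stiffness $\mat{G}+\Phi(0)$ together with a velocity kernel $\Phi$, not by a static shift of $\mat{G}$ alone.
\end{itemize}
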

\begin{proof}
Under (i)--(ii), $\mat{C}(\tau)$ obeys the linearized autocorrelation
equation~\eqref{eq:Scorr}; condition~(ii) ensures no equal-time
observable--hidden correlation survives on the right-hand side, so the
relation closes in $\mat{C}$.  Taking the Laplace transform
of~\eqref{eq:Scorr}, with $\mathcal{L}\{\ddot{\mat{C}}\}=s^2\hat{\mat{C}}(s)
-s\mat{C}(0)-\dot{\mat{C}}(0)$ and the convolution theorem
$\mathcal{L}\{\mat{K}*\mat{C}\}=\hat{\mat{K}}(s)\hat{\mat{C}}(s)$, and
solving for $\hat{\mat{K}}(s)$ gives~\eqref{eq:Sfdt}.  Recovery of
$\mat{K}(\tau)$ is therefore a \emph{deconvolution} (division by
$\hat{\mat{C}}(s)$ in the transform domain), not a pointwise operation; the
factor $\hat{\mat{C}}^{-1}(s)$ cannot be replaced by the constant
$\mat{C}^{-1}(0)$.  In practice the rational fit of the main-text algorithm
performs this inversion in pole-residue form, avoiding explicit transform
inversion.
\end{proof}


\section*{Data availability}
The simulation data and all scripts
required to reproduce the three benchmark results are publicly
available at \url{https://github.com/maayoubi/PRISM} and archived at
DOI: 10.5281/zenodo.20552805.

\section*{Declaration of interests}
The author declares no competing financial or non-financial
interests.

\bibliographystyle{elsarticle-num}
\bibliography{PRISM_refs}

\end{document}